\newtheorem{lemma}{Lemma}
\newtheorem{theorem}{Theorem}
\theoremstyle{definition}
\newtheorem{definition}{Definition}
\newtheorem{question}{Question}
\DeclareMathOperator{\cov}{cov}
\DeclareMathOperator{\tr}{tr}
\DeclareMathOperator*{\var}{Var}
\DeclareMathOperator*{\E}{\mathbb E}
\begin{document}

\title{Deviations from maximal entanglement for eigenstates of the Sachdev-Ye-Kitaev model}

\author{Yichen Huang (黄溢辰)\thanks{yichenhuang@fas.harvard.edu}}
\author{Yi Tan\thanks{yitan@g.harvard.edu}}
\author{Norman Y. Yao\thanks{nyao@fas.harvard.edu}}
\affil{Department of Physics, Harvard University, Cambridge, Massachusetts 02138, USA}

\begin{CJK}{UTF8}{gbsn}

\maketitle

\end{CJK}

\begin{abstract}

We consider mid-spectrum eigenstates of the Sachdev-Ye-Kiteav (SYK) model. We prove that for subsystems whose size is a constant fraction of the  system size, the entanglement entropy  deviates from the maximum entropy by at least a positive constant. This result highlights the difference between the entanglement entropy of mid-spectrum eigenstates of the SYK model and that of random states.

\end{abstract}

\section{Introduction}

Characterizing the emergence of chaos in many-particle quantum systems is of fundamental interest to a diverse array of fields. In the context of statistical physics, it is relevant to our understanding of  thermalization \cite{DKPR16}, and in the context of high energy physics, to our understanding of the scrambling dynamics of black holes \cite{SS08}. Meanwhile, quantum chaos has also emerged as an essential concept within quantum computing, underlying the  demonstration of quantum advantage via random quantum circuits \cite{AAB+19}. Despite these myriad  connections, the technical toolsets available to investigate quantum chaos remain limited. For quantum many-body systems, brute-force numerical methods (i.e.~exact diagonalization) are usually limited to relatively small system sizes because their run time is exponential in the system size. At the same time, generic quantum chaotic systems are not analytically tractable.

One strategy for gaining analytical insight into quantum chaos is via random matrix theory (RMT). Such a strategy involves making certain assumptions. For example, one might assume that the Hamiltonian of a quantum chaotic system behaves like a random matrix from the Gaussian unitary ensemble (GUE). This assumption implies that the eigenstates are similar to Haar-random states. Another assumption could be that, at late times, the time-evolution operator behaves like a Haar-random unitary; this would imply that at sufficiently long times, the system is described by a Haar-random state. By using RMT to develop our analytical understanding of quantum chaotic systems, one must rely on such assumptions, which, while heuristic, are mathematically elegant. Indeed, such RMT descriptions are independent of the microscopic details of the system and thus may represent universal aspects of quantum chaotic behavior. As a testament to the success and applicability of RMT, in many specific models, predictions from random matrix theory have been numerically observed \cite{KH13, YCHM15, VR17, GG18, Hua19NPB, Hua21NPB, HMK22, KSVR23, RJK24} and even rigorously proved up to certain approximations~\cite{YGC23, HH23}.

One obvious place where physical systems differ from random matrix theory is in the structure of their interactions. For example, in physical quantum spin systems  (not necessarily on a lattice), interactions among spins are usually few-body in the sense that each term in the Hamiltonian acts non-trivially only on at most a constant number of spins. Since such few-bodyness of interactions is \emph{not} captured by RMT, it is important to understand to what extent the properties of quantum chaotic systems with few-body interactions deviate from the predictions of RMT.

In this paper, the property we choose to investigate is the entanglement of mid-spectrum eigenstates. ``Mid-spectrum'' means that the energy of the eigenstate is close to the mean energy (average of all eigenvalues) of the Hamiltonian.\footnote{For a traceless Hamiltonian, mid-spectrum eigenstates are those whose energy is close to zero.} We only consider mid-spectrum eigenstates because the energy of a random state is exponentially close to the mean energy of the Hamiltonian.

\begin{question}
In quantum many-body systems governed by chaotic Hamiltonians with few-body interactions, does the difference between the entanglement of mid-spectrum eigenstates and that of random states vanish in the thermodynamic limit?
\end{question}

We answer this question for the Sachdev-Ye-Kitaev (SYK) model \cite{SY93, Kit15, MS16}. The SYK model consists of Majorana fermions with random all-to-all four-body interactions and is maximally chaotic in the sense of being a fast scrambler \cite{Kit15, MSS16, MS16}. The entanglement of eigenstates within the SYK model was studied in Refs.~\cite{FS16, LCB18, HG19, ZLC20, Zha22}.

Recall that an eigenstate of a GUE Hamiltonian is a Haar-random state and that by definition, a GUE Hamiltonian exhibits no structure. In comparison, an SYK Hamiltonian has almost no structure, but exhibits fermion parity and few-bodyness. If there is a difference between the entanglement entropy of mid-spectrum eigenstates of the SYK model and that of random states, one expects this to be attributed to few-bodyness, since fermion parity does not significantly affect the entanglement:\footnote{Without any essentially new ideas, it is straightforward to extend our main result (Theorem \ref{t:m}) to a spin-glass model \cite{ES14} with random all-to-all two-body interactions. The model is an analogue of the SYK model in spin systems and does not conserve parity.} In the thermodynamic limit, random states with and without definite parity have the same average entanglement entropy. We begin by informally stating our main result: 

\begin{theorem} [informal statement] \label{t:m}
Let $A$ be a subsystem whose size is a constant fraction of the system size. In the thermodynamic limit, the entanglement entropy between $A$ and $\bar A$ of mid-spectrum eigenstates of the SYK model deviates from maximal entanglement by at least a positive constant.
\end{theorem}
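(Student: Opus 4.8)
The plan is to lower bound the entanglement deficit $\log d_A - S_A$ by a positive constant, where $d_A$ (resp.\ $d_B$) is the Hilbert-space dimension of $A$ (resp.\ $\bar A$). Writing this deficit as the relative entropy between the reduced density matrix $\rho_A$ of a mid-spectrum eigenstate and the maximally mixed state, $\log d_A - S_A = \tr(\rho_A\log\rho_A) + \log d_A$, the guiding observation is that a deficit of order one is equivalent to the entanglement spectrum of $\rho_A$ being broadened away from the flat, Page-like profile by an amount visible at the level of \emph{constant factors}, rather than the exponentially small $O(d_A/d_B)$ broadening exhibited by Haar-random states. The whole game is to quantify this broadening and to show it survives as $N\to\infty$.

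First I would compute the disorder-averaged purity $\overline{\tr\rho_A^2}$. Its leading term is the maximal-entanglement value $1/d_A$, reproducing the Page result appropriate to the fact that a zero-energy (infinite-temperature) eigenstate looks locally maximally mixed. The crucial step is to extract the leading \emph{correction} and to show that, unlike the Haar-random case where it is suppressed by $d_A/d_B$, here it is an enhancement by a constant factor, $\overline{\tr\rho_A^2} = (1+c)/d_A$ with $c=\Theta(1)$. This is exactly where few-bodyness enters: the correction is generated by correlations across the cut, i.e.\ by the Hamiltonian terms acting on both $A$ and $\bar A$, whose number is a constant fraction of all terms; I would organize the calculation as a sum over Majorana/replica pairings and isolate this crossing contribution. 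A constant-factor purity enhancement is moreover a \emph{necessary} consequence of a constant von Neumann deficit, since $S_2\le S_A$ gives $\tr\rho_A^2 = e^{-S_2}\ge e^{-S_A} = d_A^{-1}e^{\,\log d_A - S_A}$, so this step also pins down the constant one can hope to prove.

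The harder half is to upgrade the purity enhancement into a genuine lower bound on the von Neumann deficit, because the purity alone is \emph{not} sufficient: a spectrum with one slightly enlarged eigenvalue and the rest flat enhances $\tr\rho_A^2$ by a constant factor while leaving $S_A$ within $o(1)$ of maximal. To exclude this I would control the shape of the entanglement spectrum from above, e.g.\ bounding the largest eigenvalue $\|\rho_A\|_\infty\le C/d_A$, equivalently bounding the higher moments $\overline{\tr\rho_A^k}\le C_k/d_A^{k-1}$ and letting $k$ grow. Given both a constant-factor enhanced purity and an $O(1/d_A)$ cap on the maximal eigenvalue, the entropy-maximizing spectrum is effectively two-valued, and an elementary optimization shows its von Neumann entropy falls short of $\log d_A$ by a positive constant fixed by $c$. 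Finally I would invoke concentration of $\tr\rho_A^2$ and of the relevant higher moments over the disorder to pass from averaged statements to typical mid-spectrum eigenstates, and translate the large-$N$ asymptotics into the thermodynamic-limit claim.

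I expect the main obstacle to be precisely this passage from moments to the von Neumann entropy. The deficit lives in the fine structure of the entanglement spectrum and is invisible to any single few-body observable, all of which match their infinite-temperature values by eigenstate thermalization; it cannot be read off from one expectation value. One must therefore control enough of the moment sequence---equivalently, enough of the full entanglement spectrum---to make the replica-type extraction of $S_A$ rigorous and to rule out the spectrum concentrating its excess weight in a manner that hides the deficit. Establishing the uniform higher-moment bounds (and their concentration) for \emph{exact} eigenstates of a few-body Hamiltonian, rather than for an idealized ensemble, is where I anticipate the bulk of the technical work.
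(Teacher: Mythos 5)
Your guiding premise --- that the deficit ``is invisible to any single few-body observable, all of which match their infinite-temperature values by eigenstate thermalization'' and ``cannot be read off from one expectation value'' --- is exactly the opposite of the mechanism the paper exploits, and abandoning it is what makes the theorem provable. For an eigenstate $|\psi_K\rangle$ with $H_K|\psi_K\rangle=\lambda|\psi_K\rangle$, the traceless part $G_K:=H_K^2-2^{-N/2}\tr(H_K^2)I$ of $H_K^2$ is a sum of at-most-$8$-body terms, and the eigenvalue equation forces $\langle\psi_K|G_K|\psi_K\rangle=\lambda^2-2^{-N/2}\tr(H_K^2)\le-1/4$ for mid-spectrum $\lambda$, whereas the maximally mixed state gives $0$. (ETH does not contradict this: the microcanonical value of $H_K^2$ at the mean energy differs from its full-trace average by the $\Theta(1)$ energy variance of the maximally mixed state.) The paper then restricts $G_K$ to $A$ with combinatorial weights so that averaging over subsystems recovers $G_K$, invokes the fact that the Gibbs state maximizes entropy at fixed energy, and bounds the resulting thermal entropy using moment bounds on $\tr(G_{K,A}^n)$ --- quantities that depend only on the Hamiltonian, never on the eigenstate. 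One exactly known expectation value of one few-body observable certifies the entire deficit.

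By contrast, the two pillars of your route are each out of reach. First, a rigorous evaluation of $\overline{\tr\rho_A^2}$ for \emph{exact} eigenstates to the precision of a constant-factor enhancement over $1/d_A$ is not available; the existing computations of this quantity in the SYK entanglement literature are replica/path-integral calculations with uncontrolled approximations, and you offer no mechanism to make them rigorous. Second, the cap $\|\rho_A\|_\infty\le C/d_A$ is a statement about \emph{every} Schmidt coefficient of an exact eigenstate; nothing close to this is known, and it is not clearly true even heuristically. You correctly note that purity alone cannot lower-bound the von Neumann deficit, your two-constraint optimization (enhanced purity plus a sup-norm cap implies a constant deficit) is sound, and the consistency check $\tr\rho_A^2\ge e^{\Delta}/d_A$ is correct; but labeling the two pillars ``the bulk of the technical work'' understates the situation --- they \emph{are} the open problem, and your reduction does not shrink it. The lesson from the paper's proof is to trade fine control of the entanglement spectrum for coarse control of a cleverly chosen few-body function of the Hamiltonian itself.
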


The rest of our paper is organized as follows. Section \ref{sec:pre} introduces basic definitions and briefly reviews the entanglement of random states. Section \ref{sec:res} presents the main result, i.e.~Theorem \ref{t:main}, which is the formal version of Theorem \ref{t:m}. Section \ref{sec:skt} gives an outline of the proof, while Section \ref{app} provides technical details. We note a forthcoming companion manuscript, which  contains corroborating  numerical results and emphasizes the physical intuition underlying the technical theorems presented here. 

\section{Preliminaries} \label{sec:pre}

We will use  standard asymptotic notation. Let $f,g:\mathbb R^+\to\mathbb R^+$ be two functions. One writes $f(x)=O(g(x))$ if and only if there exist constants $M,x_0>0$ such that $f(x)\le Mg(x)$ for all $x>x_0$; $f(x)=\Omega(g(x))$ if and only if there exist constants $M,x_0>0$ such that $f(x)\ge Mg(x)$ for all $x>x_0$; $f(x)=\Theta(g(x))$ if and only if there exist constants $M_1,M_2,x_0>0$ such that $M_1g(x)\le f(x)\le M_2g(x)$ for all $x>x_0$; $f(x)=o(g(x))$ if and only if for any constant $M>0$ there exists a constant $x_0>0$ such that $f(x)<Mg(x)$ for all $x>x_0$.

Let $A$ be a subsystem and $\bar A$ be the complement of $A$ (rest of the system). Let $d_A,d_{\bar A}$ be the Hilbert space dimensions of subsystems $A,\bar A$, respectively. Assume without loss of generality that $d_A\le d_{\bar A}$.

\begin{definition} [entanglement entropy]
The entanglement entropy of a pure state $|\psi\rangle$ is defined as the von Neumann entropy
\begin{equation}
    S(\psi_A):=-\tr(\psi_A\ln\psi_A)
\end{equation}
of the reduced density matrix $\psi_A:=\tr_{\bar A}|\psi\rangle\langle\psi|$.
\end{definition}

Note that $\max_{|\psi\rangle}S(\psi_A)=\ln d_A$. Thus, $\ln d_A$ is referred to as maximal entanglement.

\begin{theorem} [conjectured and partially proved by Page \cite{Pag93}; proved in Refs.~\cite{FK94, San95, Sen96}] \label{t:page}
For a pure state $|\psi\rangle$ chosen uniformly at random with respect to the Haar measure,
\begin{equation} \label{eq:page}
\E_{|\psi\rangle}S(\psi_A)=\sum_{k=d_{\bar A}+1}^{d_Ad_{\bar A}}\frac1k-\frac{d_A-1}{2d_{\bar A}}=\ln d_A-\frac{d_A}{2d_{\bar A}}+\frac{O(1)}{d_Ad_{\bar A}}.
\end{equation}
\end{theorem}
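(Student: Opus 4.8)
The plan is to pass from the geometric Haar average to an eigenvalue integral and then remove the fixed-trace constraint by a decoupling trick. First I would write $|\psi\rangle=\sum_{ij}G_{ij}|i\rangle_A|j\rangle_{\bar A}$ with $G$ a $d_A\times d_{\bar A}$ matrix whose entries are i.i.d.\ standard complex Gaussians; normalizing $|\psi\rangle$ amounts to dividing by $T:=\tr(GG^\dagger)$, so that $\psi_A=GG^\dagger/T$ and $S(\psi_A)=-\sum_i\lambda_i\ln\lambda_i$, where $\{\lambda_i\}$ are the eigenvalues of $\psi_A$. The joint law of $\{\lambda_i\}$ is the fixed-trace Laguerre unitary ensemble, with density proportional to $\delta(\sum_i\lambda_i-1)\prod_{i<j}(\lambda_i-\lambda_j)^2\prod_i\lambda_i^{d_{\bar A}-d_A}$. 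Working directly with this constrained measure is awkward because of the delta function coupling the eigenvalues, which motivates the next step.

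The key step is to relate the constrained average to the \emph{unconstrained} complex Wishart ensemble $W=GG^\dagger$, whose eigenvalues $w_i=T\lambda_i$ follow the unconstrained Laguerre density $\propto\prod_{i<j}(w_i-w_j)^2\prod_i w_i^{d_{\bar A}-d_A}e^{-w_i}$. Changing variables from $\{w_i\}$ to $(T,\{\lambda_i\})$ shows that the total weight $T$ factorizes from the normalized eigenvalues, so that $T$ and $\{\lambda_i\}$ are \emph{independent}, with $T\sim\mathrm{Gamma}(d_Ad_{\bar A},1)$. I would then invoke the algebraic identity $\tr(W\ln W)=\sum_i w_i\ln w_i=T\ln T-T\,S$ (using $\sum_i\lambda_i=1$). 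Taking the expectation over $W$ and using independence of $T$ and $S$ gives
\begin{equation}
\E_{|\psi\rangle}S(\psi_A)=\frac{\E_W[T\ln T]-\E_W[\tr(W\ln W)]}{\E_W[T]},
\end{equation}
which reduces everything to two averages over the unconstrained ensemble, with $\E_W[T]=d_Ad_{\bar A}$.

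The remaining two averages are then computed separately. Since $T$ is Gamma-distributed, $\E_W[T\ln T]=d_Ad_{\bar A}\,\Psi(d_Ad_{\bar A}+1)$ follows from differentiating $\Gamma$ under the integral sign, where $\Psi$ is the digamma function. For the second term I would write $\E_W[\tr(W\ln W)]=\int_0^\infty w\ln w\,\bar\rho(w)\,dw$ in terms of the mean level density $\bar\rho$ of the Laguerre unitary ensemble, expanded via the associated Laguerre polynomials $L_k^{(d_{\bar A}-d_A)}$. This last integral---evaluating $\int_0^\infty w^{d_{\bar A}-d_A+1}\ln w\,[L_k^{(d_{\bar A}-d_A)}(w)]^2e^{-w}\,dw$ and summing over $k=0,\dots,d_A-1$---is the main obstacle, since it is the only place where the logarithm meets the orthogonal-polynomial structure; I expect it to collapse to $d_Ad_{\bar A}\,\Psi(d_{\bar A}+1)+\tfrac12 d_A(d_A-1)$ after telescoping.

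Assembling the pieces yields the exact digamma formula $\Psi(d_Ad_{\bar A}+1)-\Psi(d_{\bar A}+1)-\frac{d_A-1}{2d_{\bar A}}$, which equals the harmonic-sum expression in \eqref{eq:page} via $\Psi(n+1)=\sum_{k=1}^n\frac1k-\gamma$. Finally, the asymptotic form follows from $\sum_{k=1}^n\frac1k=\ln n+\gamma+\frac{1}{2n}+O(n^{-2})$, which turns $\sum_{k=d_{\bar A}+1}^{d_Ad_{\bar A}}\frac1k$ into $\ln d_A-\frac{1}{2d_{\bar A}}+O\big((d_Ad_{\bar A})^{-1}\big)$ (using $d_A\le d_{\bar A}$) and combines with $-\frac{d_A-1}{2d_{\bar A}}$ to produce $\ln d_A-\frac{d_A}{2d_{\bar A}}+\frac{O(1)}{d_Ad_{\bar A}}$, as claimed.
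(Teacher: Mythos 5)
The paper does not actually prove this theorem: it imports the exact formula from the cited references and only verifies the second equality in \eqref{eq:page} via the harmonic-number asymptotics $\sum_{k=1}^n\frac1k=\ln n+\gamma+\frac1{2n}+O(1/n^2)$. Your proposal therefore attempts something the paper never does, and the route you choose is essentially the one taken in the literature (closest to Sen's proof): Gaussian representation of the Haar state, fixed-trace Laguerre unitary ensemble, decoupling of the trace $T\sim\mathrm{Gamma}(d_Ad_{\bar A},1)$ from the normalized spectrum, the identity $\tr(W\ln W)=T\ln T-TS$, and reduction to logarithmic moments of the unconstrained Wishart ensemble. All of these steps are correct as stated, including the independence argument (the Jacobian of $\{w_i\}\mapsto(T,\{\lambda_i\})$ contributes $T^{d_A-1}$, so the joint density factorizes with the correct Gamma marginal), the value $\E[T\ln T]=d_Ad_{\bar A}\Psi(d_Ad_{\bar A}+1)$, the final digamma-to-harmonic-sum conversion, and the asymptotic expansion (your use of $d_A\le d_{\bar A}$ to absorb $O(d_{\bar A}^{-2})$ into $O(1/(d_Ad_{\bar A}))$ is exactly right).

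The genuine gap is the one you flag yourself: the evaluation of $\sum_{k=0}^{d_A-1}\int_0^\infty w^{d_{\bar A}-d_A+1}\ln w\,[L_k^{(d_{\bar A}-d_A)}(w)]^2e^{-w}\,dw$ is asserted (``I expect it to collapse to $d_Ad_{\bar A}\Psi(d_{\bar A}+1)+\tfrac12d_A(d_A-1)$'') rather than carried out. This is not a routine omission --- it is the entire nontrivial content of the theorem, and it is precisely why the formula remained a conjecture between Page's 1993 paper and the proofs of 1994--96. The integral does not follow from the standard Laguerre orthogonality relations because of the logarithm; one needs either the differentiation-with-respect-to-the-degree/parameter trick, a recursion in $d_A$ exploiting the Christoffel--Darboux structure, or the resolvent/generating-function manipulations of the cited proofs. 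Your asserted value is correct (it reassembles to Page's formula exactly), so the skeleton is sound, but as written the proposal reduces the theorem to an unproven identity that is as hard as the theorem itself. To make this a proof you would need to supply that computation, or alternatively cite it, which would put you back where the paper already is.
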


Note that the second step of Eq.~(\ref{eq:page}) uses the formula
\begin{equation}
\sum_{k=1}^n\frac1k=\ln n+\gamma+\frac1{2n}+O(1/n^2)
\end{equation}
for $n=d_{\bar A}$ and $n=d_Ad_{\bar A}$, where $\gamma\approx0.577216$ is the Euler-Mascheroni constant.

The distribution of $S(\psi_A)$ is highly concentrated around the mean $\E_{|\psi\rangle}S(\psi_A)$ \cite{HLW06}. This can also be seen from the exact formula \cite{VPO16, Wei17} for the variance $\var_{|\psi\rangle}S(\psi_A)$.

In a system of $N$ qubits, let $A$ be a subsystem of size $L$. Suppose $f:=L/N$ is a fixed constant such that $0<f\le1/2$. Theorem \ref{t:page} implies that in the limit $N\to\infty$, 
\begin{equation}
    \E_{|\psi\rangle}S(\psi_A)=L\ln2-2^{(2f-1)N-1}+O(2^{-N}).
\end{equation}
Thus, for $0<f<1/2$, the difference between the entanglement entropy of random states and maximal entanglement is exponentially small $e^{-\Omega(N)}$. For an equal bipartition ($f=1/2$), the difference is exponentially close to $1/2$.

\section{Results} \label{sec:res}

Consider a system of $N$ Majorana fermions $\chi_1,\chi_2,\ldots,\chi_N$ with the anticommutation relation $\{\chi_j,\chi_k\}=2\delta_{jk}$, where $N$ is an even number and $\delta$ is the Kronecker delta. Let $[N]:=\{1,2,\ldots,N\}$ be the set of integers from $1$ to $N$ and
\begin{equation}
\binom{[N]}4:=\{J\subseteq[N]:|J|=4\}    
\end{equation}
be the set of all size-$4$ subsets of $[N]$.

\begin{definition} [Sachdev-Ye-Kitaev model \cite{SY93, Kit15, MS16}]
Let $K:=\{K_J\}_{J\in\binom{[N]}4}$ be a collection of $\binom{N}4$ independent real Gaussian random variables with zero mean $\overline{K_J}=0$ and unit variance $\overline{K_J^2}=1$. For any $J=\{j,k,l,m\}$ with $j<k<l<m$, define $X_J=\chi_j\chi_k\chi_l\chi_m$. The Hamiltonian of the SYK model is
\begin{equation} \label{eq:syk}
H_K=\frac1{\sqrt{\binom{N}4}}\sum_{J\in\binom{[N]}4}K_JX_J.
\end{equation}
\end{definition}

A complex version \cite{Sac15, GKST20} of the SYK model is also known as the embedded Gaussian unitary ensemble \cite{Kot01, BW03}. Historically, the ensemble was introduced in order to study the effect of few-bodyness on properties other than entanglement.

Let $A\subseteq[N]$ with $|A|=L\le N/2$, where $L\ge8$ is an even integer. $A$ can be understood as a subsystem (of size $L$) consisting of the Majorana fermions whose indices are in $A$. Let
\begin{equation}
    \binom{A}4:=\{J\subseteq A:|J|=4\}
\end{equation}
be the set of all size-$4$ subsets of $A$. Let $\E_{|A|=L}$ denote averaging over all subsystems of size $L$. There are $\binom{N}L$ such subsystems.

For each $K$, let $|\psi_K\rangle$ be an eigenstate of $H_K$. Let $\max_{|\psi_K\rangle}$ denote maximizing over all eigenstates of $H_K$. Let $\psi_{K,A}$ be the reduced density matrix of $|\psi_K\rangle$ on subsystem $A$.

\begin{theorem} \label{t:main}
For any $L\ge8$,
\begin{equation} \label{eq:main}    \Pr_K\left(\max_{|\psi_K\rangle}\E_{|A|=L}S(\psi_{K,A})=\frac{L\ln2}2-\Omega(L^8/N^8)\right)=1-\frac{O(1)}{L^2\max\{L^2,N\}}.
\end{equation}
\end{theorem}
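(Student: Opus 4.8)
The plan is to reduce the entropy statement to a purity lower bound and then to extract that bound from the eigenvalue equation. First I would convert the entropy deficit into a purity excess. Writing $\rho:=\psi_{K,A}$ and expanding it in the Hermitian Majorana-string basis $\{\Gamma_J\}_{J\subseteq A}$ (normalized so that $\tr(\Gamma_J\Gamma_{J'})=d_A\delta_{JJ'}$ and $\Gamma_\emptyset=I$), one has the identity $\tr(\rho^2)-1/d_A=\frac1{d_A}\sum_{\emptyset\ne J\subseteq A}\langle\Gamma_J\rangle^2$, where $\langle\cdot\rangle$ is the expectation in $|\psi_K\rangle$. A tangent-line estimate for $x\mapsto x\ln x$ gives, whenever every eigenvalue of $\rho$ lies within a constant factor of $1/d_A$, a bound $\ln d_A-S(\rho)\ge c\,d_A(\tr\rho^2-1/d_A)=c\sum_{\emptyset\ne J\subseteq A}\langle\Gamma_J\rangle^2$; and if instead some eigenvalue is far from $1/d_A$ the state is already far from maximal entanglement, so this hypothesis is a case distinction rather than a true gap. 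It then suffices to show $\E_{|A|=L}\sum_{\emptyset\ne J\subseteq A}\langle\Gamma_J\rangle^2=\Omega(L^8/N^8)$ \emph{simultaneously} for every eigenstate; the per-eigenstate character of such a bound is exactly what disposes of the maximum over $|\psi_K\rangle$ for free.

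A naive attempt to lower-bound this sum by the four-fermion terms $\langle X_J\rangle^2$ alone fails, since those correlators are exponentially small for mid-spectrum states. Instead I would use a single quadratic witness supported on $A$ and apply Cauchy--Schwarz. Let $H_A$ be the part of the Hamiltonian acting only inside $A$, with $\sigma_A^2:=\tr(H_A^2)/d_A=\binom{N}{4}^{-1}\sum_{J\in\binom{A}{4}}K_J^2$, which concentrates near $(L/N)^4$. Taking the traceless witness $\mathcal O:=H_A^2-\sigma_A^2 I$, whose Majorana-string expansion is supported on $\emptyset\ne J\subseteq A$, Cauchy--Schwarz gives $\sum_{\emptyset\ne J\subseteq A}\langle\Gamma_J\rangle^2\ge d_A\langle\mathcal O\rangle^2/\|\mathcal O\|_2^2=(\langle H_A^2\rangle-\sigma_A^2)^2/(m_4-\sigma_A^4)$ with $m_4:=\tr(H_A^4)/d_A$. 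A short combinatorial moment computation shows $m_4-\sigma_A^4=\Theta(\sigma_A^4)=\Theta(L^8/N^8)$, so the whole problem reduces to a uniform lower bound $|\langle H_A^2\rangle-\sigma_A^2|=\Omega(\sigma_A^4)$: the eigenstate may reproduce the infinite-temperature value $\sigma_A^2$ of $\langle H_A^2\rangle$ only up to relative error of order $\sigma_A^2$.

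The hard part will be establishing this deviation for \emph{every} eigenstate. The natural tool is the exact Ward-type identity $\langle H_A(H-E)\rangle=0$, which holds because $|\psi_K\rangle$ is an eigenstate of energy $E$. Decomposing $H=H_A+H_{\bar A}+H_\partial$ into the parts inside $A$, inside $\bar A$, and straddling the cut, this reads $\langle H_A^2\rangle=E\langle H_A\rangle-\langle H_AH_{\bar A}\rangle-\langle H_AH_\partial\rangle$; here $H_A$ and $H_{\bar A}$ commute (disjoint supports, even operators) and $E\approx0$ for mid-spectrum states. The task is to show the right-hand side cannot cancel $\sigma_A^2$ to better than relative order $\sigma_A^2$. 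I would attack this by a low-order perturbative treatment of the boundary coupling $H_\partial$ combined with a second-moment estimate, arguing the forced deviation has definite size. Controlling these cross-correlators in the worst case over eigenstates, rather than on average, and pinning the exponent exactly at $L^8/N^8$, is the delicate step.

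Finally I would assemble the probabilistic statement. Averaging over subsystems replaces each correlator by its inclusion probability $\Pr[J\subseteq A]=\binom{N-|J|}{L-|J|}/\binom{N}{L}\approx(L/N)^{|J|}$, which is what promotes the operator-level bounds to the $(L/N)^8$ scaling. The disorder enters only through sums such as $\binom{N}{4}^{-1}\sum_{J\in\binom{A}{4}}K_J^2$ and the analogous boundary sums, so a Chebyshev/second-moment bound on their Gaussian fluctuations yields the failure probability; tracking whether the dominant fluctuations come from the $\Theta(L^4)$ internal couplings or from the boundary couplings should reproduce the two regimes of $O(1)/(L^2\max\{L^2,N\})$. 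The remaining bookkeeping---verifying the case distinction behind the entropy-to-purity step and checking that the discarded higher-point contributions to the purity only help---is routine relative to the correlator estimate above.
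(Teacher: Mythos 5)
Your reduction from entropy deficit to purity excess is the step that breaks. The inequality $\ln d_A-S(\rho)\ge c\,d_A\bigl(\tr\rho^2-1/d_A\bigr)$ is false outside the regime where \emph{all} eigenvalues of $\rho$ are within $O(1/d_A)$ of $1/d_A$, and your fallback---``if some eigenvalue is far from $1/d_A$ the state is already far from maximal entanglement''---is not true: take $\rho$ with one eigenvalue equal to $d_A^{-1/2}$ and the rest essentially uniform. Then $d_A(\tr\rho^2-1/d_A)=\Theta(1)$ while $\ln d_A-S(\rho)=\Theta(d_A^{-1/2}\ln d_A)\to0$. So a purity excess of the size your Cauchy--Schwarz step delivers, $\tr\rho^2-1/d_A=\Omega\bigl((L/N)^8/d_A\bigr)$, simply does not certify an $\Omega(L^8/N^8)$ entropy deficit; the only unconditional conversion available (Pinsker, $\ln d_A-S\ge\frac12\|\rho-I/d_A\|_1^2\ge\frac12(\tr\rho^2-1/d_A)$) loses the factor of $d_A$ and yields an exponentially small bound. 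This is structural: no improvement of the correlator estimates can repair it. The paper avoids purity altogether and instead uses the Gibbs variational principle---among all states with a prescribed value of $\E_{|A|=L}\tr(\rho_A H_{K,A})$ (or of the analogous $G$-observable), the thermal state maximizes $\E_{|A|=L}S(\rho_A)$---and then computes the thermal entropy by integrating $\mathrm d\mathcal S=\beta\,\mathrm d\mathcal E$ with moment bounds controlling $\mathcal E(\beta)$. That converts an $\Omega(1)$ deviation of a few-body observable into an $\Omega(1/C^2)=\Omega(L^8/N^8)$ entropy deficit with no assumption on the spectrum of $\psi_{K,A}$.

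The second gap is that the crux of your argument, a \emph{uniform-over-eigenstates} lower bound $|\langle H_A^2\rangle-\sigma_A^2|=\Omega(\sigma_A^4)$, is only conjectured, with the proposed Ward identity plus ``low-order perturbative treatment of $H_\partial$'' left entirely unexecuted; there is no reason such a bound holds for a fixed $A$, and even after averaging over $A$ you need a mechanism that works for every eigenstate simultaneously. You correctly identified that $H^2$ is the right witness and that it produces the $(L/N)^8$ scaling, but the paper gets the deviation \emph{exactly} rather than perturbatively: writing $G_K$ for the traceless part of $H_K^2$, every eigenstate satisfies $\langle\psi_K|G_K|\psi_K\rangle=\lambda^2-\tr(H_K^2)/2^{N/2}\le\lambda^2-1/2$, which is $\le-1/4$ whenever $|\lambda|\le1/2$ (the chi-square tail bound gives $\tr(H_K^2)/2^{N/2}\ge1/2$ with the stated probability), and the identity $\E_{|A|=L}G_{K,A}\otimes I_{\bar A}=G_K$ transfers this to the subsystem average; eigenstates with $|\lambda|\ge1/2$ are handled separately with $H_K$ itself, yielding the stronger deficit $\Omega(L^4/N^4)$. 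If you want to salvage your outline, replace the purity step by the maximum-entropy argument and replace the perturbative Ward-identity step by the exact $G_K$ identity; what then remains is the concentration analysis (Chebyshev on $\E_{|A|=L}\tr(G_{K,A}^n)$ over overlapping subsystems) that produces the $1-O(1)/(L^2\max\{L^2,N\})$ probability.
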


The eigenstate $|\psi_K\rangle$ that maximizes $\E_{|A|=L}S(\psi_{K,A})$ is called the maximally entangled eigenstate of $H_K$. It is usually but not always a mid-spectrum eigenstate. Suppose $f:=L/N$ is a fixed constant such that $0<f\le1/2$. Theorem \ref{t:main} says that in the limit $N\to\infty$, with high probability the entanglement entropy of maximally entangled
eigenstates of the SYK model deviates from maximal entanglement by at least a positive constant.

Comparing Theorems \ref{t:page} and \ref{t:main}, the entanglement entropy of maximally entangled eigenstates of the SYK model is provably different from that of random states. The difference is $\Omega(1)$ if $N/2-C'>L=\Omega(N)$ for a certain constant $C'>0$. We conjecture that the difference is also $\Omega(1)$ if $N/2-C'\le L\le N/2$.

\section{Proof sketch} \label{sec:skt}

In this section, we sketch the proof of Theorem \ref{t:main} for $L=\Omega(N)$. For simplicity, we assume average behavior is typical behavior. The full proof for any $L\ge8$ with rigorous probabilistic analysis is given in Section \ref{app}.

Let $|\psi_K\rangle$ be an eigenstate of $H_K$ with eigenvalue $\lambda$. We have either $|\lambda|\ge1/2$ or $|\lambda|<1/2$.

\paragraph{Proof sketch for $|\lambda|\ge1/2$.}Since $\tr H_K=0$, the condition $|\lambda|\ge1/2$ means that the energy of $|\psi_K\rangle$ is significantly different from the mean energy of $H_K$. We use this observation to prove that the entanglement entropy of $|\psi_K\rangle$ deviates from maximal entanglement by $\Omega(1)$.

Let
\begin{equation}
H_{K|A}=\frac1{\sqrt{\binom{N}4}}\sum_{J\in\binom{A}4}K_JX_J.
\end{equation}
be the restriction of $H_K$ to subsystem $A$. Let
\begin{equation}
\sigma_{K|A}(\beta):=e^{-\beta H_{K|A}}/\tr(e^{-\beta H_{K|A}}),\quad E_{K|A}(\beta):=\tr(\sigma_{K|A}(\beta)H_{K|A})    
\end{equation}
be the thermal state of $H_{K|A}$ and its energy at inverse temperature $\beta$ so that $E_{K|A}(0)=0$. Since $H_{K|A}$ is, up to a prefactor, the SYK model in a system of $L$ Majorana fermions, the heat capacity of the model $\sqrt LH_{K|A}$ is extensive in the sense that it is  proportional to $L$. Thus,
\begin{equation} \label{eq:ebeta}
E_{K|A}(\beta)=-\Theta(\beta)
\end{equation}
for $|\beta|=o(\sqrt L)$.

Assume without loss of generality that $\lambda<-1/2$. The Hamiltonian $H_K$ (\ref{eq:syk}) has $\binom{N}4$ terms, $\binom{L}4$ of which are in $H_{K|A}$. In an average sense,
\begin{equation} \label{eq:eng}
    \tr(\psi_{K,A}H_{K|A})=\frac{\binom{L}4}{\binom{N}4}\lambda=-\Omega(1).
\end{equation}
Since the thermal state maximizes the entropy among all states with the same energy \cite{Weh78},
\begin{equation}
    S(\psi_{K,A})\le S(\sigma_{K|A}(\beta_*)),
\end{equation}
where $\beta_*$ is determined by
\begin{equation}
    E_{K|A}(\beta_*)=\tr(\psi_{K,A}H_{K|A}).
\end{equation}
Combining this equation with Eqs.~(\ref{eq:ebeta}), (\ref{eq:eng}), we find that $\beta_*=\Omega(1)$ is positive. Finally,
\begin{multline}
S(\sigma_{K|A}(\beta_*))=S(\sigma_{K|A}(0))+\int_0^{\beta_*}\,\mathrm dS(\sigma_{K|A}(\beta))=\frac{L\ln2}2+\int_0^{\beta_*}\,\mathrm\beta\,\mathrm dE_{K|A}(\beta)\\
=\frac{L\ln2}2-\int_0^{\beta_*}\Theta(\beta)\,\mathrm d\beta=\frac{L\ln2}2-\Theta(\beta_*^2)=\frac{L\ln2}2-\Omega(1).
\end{multline}

\paragraph{Proof sketch for $|\lambda|<1/2$.}

We may assume $\lambda=0$ so that $|\psi_K\rangle$ is exactly at the mean energy of $H_K$. The proof for other values of $|\lambda|<1/2$ is (almost) identical. Most terms in
\begin{equation}
H_K^2=\frac1{\binom{N}4}\sum_{J_1,J_2\in\binom{[N]}4}K_{J_1}K_{J_2}X_{J_1}X_{J_2}
\end{equation}
are traceless. Let
\begin{equation} \label{eq:defG}
G_K=\frac1{\binom{N}4}\sum_{J_1\neq J_2;J_1,J_2\in\binom{[N]}4}K_{J_1}K_{J_2}X_{J_1}X_{J_2}
\end{equation}
be the traceless part of $H_K^2$. We find that $|\psi_K\rangle$ is an eigenstate of $G_K$ with eigenvalue
\begin{equation}
    -\frac{\tr(H_K^2)}{2^{N/2}}=-\frac1{\binom{N}4}\sum_{J\in\binom{[N]}4}K_J^2=-1\pm O(1/N^2).
\end{equation}
Thus, the energy of $|\psi_K\rangle$ with respect to $G_K$ is significantly different from the mean energy of $G_K$. Since $G_K$ is a model with random all-to-all few-body interactions, the above proof for $|\lambda|>1/2$ can be adapted to $G_K$.

$G_K$ differs from $H_K$ in that its coefficients are random variables not independent from each other. The dependence leads to technical difficulties, which fortunately can be handled rigorously.

\section{Proof of Theorem \ref{t:main}} \label{app}

Since $\tr(X_{J_1}X_{J_2})=0$ for $J_1\neq J_2$,
\begin{equation}
    \frac{\tr(H_K^2)}{2^{N/2}}=\frac1{\binom{N}4}\sum_{J\in\binom{[N]}4}K_J^2
\end{equation}
is distributed as $\frac1{\binom{N}4}\chi_{\binom{N}4}^2$, where $\chi_k^2$ denotes the chi-square distribution with $k$
degrees of freedom. It follows directly from the tail bound \cite{LM00} for the chi-square distribution that
\begin{lemma}
\begin{equation}
\Pr_K\left(\frac{\tr(H_K^2)}{2^{N/2}}\ge1/2\right)=1-e^{-\Omega(N^4)}.
\end{equation}
\end{lemma}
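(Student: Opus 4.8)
The plan is to recognize that the quantity in question has an exact distributional characterization, and then invoke a standard concentration inequality. Since the $X_J$ are distinct products of four Majorana operators for distinct $J$, they are traceless and mutually orthogonal under the trace inner product, so $\tr(X_{J_1}X_{J_2})=0$ for $J_1\neq J_2$ and $\tr(X_J^2)=2^{N/2}$. Expanding $H_K^2$ and taking the trace therefore kills all off-diagonal terms, leaving $\tr(H_K^2)/2^{N/2}=\binom{N}{4}^{-1}\sum_J K_J^2$. Because the $K_J$ are independent standard Gaussians, the sum $\sum_J K_J^2$ is \emph{exactly} a chi-square random variable with $k:=\binom{N}{4}$ degrees of freedom, so the normalized trace is $\frac1k\chi_k^2$, a random variable with mean $1$.

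The next step is to control the lower tail of $\frac1k\chi_k^2$ at the threshold $1/2$. The cleanest route is the Laurent--Massart bound \cite{LM00}, which gives, for a $\chi_k^2$ variable $Z$ and any $t>0$, the deviation inequality $\Pr(Z\le k-2\sqrt{kt})\le e^{-t}$. Rewriting the event $\frac1k\chi_k^2<1/2$ as $Z<k/2=k-k/2$, I would match $2\sqrt{kt}=k/2$, i.e. set $t=k/16$, so that $\Pr\!\bigl(\frac1k\chi_k^2\le 1/2\bigr)\le e^{-k/16}=e^{-\Omega(k)}$. Taking complements yields $\Pr\!\bigl(\frac1k\chi_k^2\ge 1/2\bigr)\ge 1-e^{-\Omega(k)}$.

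Finally I would translate the degrees of freedom back into $N$: since $k=\binom{N}{4}=\Theta(N^4)$, we have $e^{-\Omega(k)}=e^{-\Omega(N^4)}$, which is exactly the claimed probability bound. This establishes that with overwhelming probability the normalized trace $\tr(H_K^2)/2^{N/2}$ is at least $1/2$, completing the lemma.

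I do not expect a genuine obstacle here: the result is essentially an application of a cited tail bound to an exactly identified chi-square distribution, and the only care required is bookkeeping to convert the Laurent--Massart parameter into the stated threshold and to confirm that $\binom{N}{4}=\Theta(N^4)$ so the exponent becomes $\Omega(N^4)$. The one point worth double-checking is the direction of the tail: because the mean of $\frac1k\chi_k^2$ is $1$ and the threshold $1/2$ lies strictly below the mean, it is the \emph{lower} deviation we must bound, and the Laurent--Massart lower-tail inequality (rather than the upper-tail one) is the appropriate tool.
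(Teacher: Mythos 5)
Your proposal is correct and follows essentially the same route as the paper: identify $\tr(H_K^2)/2^{N/2}$ as $\frac{1}{k}\chi_k^2$ with $k=\binom{N}{4}$ via the orthogonality $\tr(X_{J_1}X_{J_2})=0$ for $J_1\neq J_2$, then apply the Laurent--Massart lower-tail bound. The paper leaves the tail-bound application implicit (``it follows directly''), whereas you correctly work out the parameter choice $t=k/16$ and the direction of the deviation; there is no gap.
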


Let
\begin{equation}
C_k:=\prod_{j=0}^{k-1}\frac{N-j}{L-j}
\end{equation}
for $k=4,5,\ldots,8$ and
\begin{equation}
    C:=\frac{C_8}{C_4}=\frac{(N-4)(N-5)(N-6)(N-7)}{(L-4)(L-5)(L-6)(L-7)}=\Theta(N^4/L^4).
\end{equation}

Let
\begin{equation} \label{eq:HKA}
H_{K,A}=\frac{C_4}{\sqrt{\binom{N}4}}\sum_{J\in\binom{A}4}K_JX_J.
\end{equation}
Since $H_{K,A}/C_4$ is the restriction of $H_K$ to subsystem $A$,
\begin{equation}
    \E_{|A|=L}H_{K,A}\otimes I_{\bar A}=H_K,
\end{equation}
where $I_{\bar A}$ is the identity operator on subsystem $\bar A$.

Let $\E_K$ denote averaging over all random variables $K_J|_{J\in\binom{[N]}{4}}$.

\begin{lemma} [\cite{FTW19}] \label{l:H1}
For any positive even integer $n$,
\begin{equation}
\E_K\frac{\tr(H_{K,A}^n)}{2^{L/2}}\le C_4^{n/2}(n-1)!!.
\end{equation}
\end{lemma}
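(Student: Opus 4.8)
The plan is to compute the Gaussian moment directly: expand the $n$-th power, take the expectation over $K$ termwise via Wick's theorem, and reduce everything to a sign-weighted count of index tuples. The upper bound then drops out as soon as the signs are discarded by the triangle inequality.

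First I would expand
\[
\tr(H_{K,A}^n)=\frac{C_4^n}{\binom{N}{4}^{n/2}}\sum_{J_1,\dots,J_n\in\binom{A}{4}}K_{J_1}\cdots K_{J_n}\,\tr\!\big(X_{J_1}\cdots X_{J_n}\big).
\]
Because the $K_J$ are independent standard Gaussians, $\E_K[K_{J_1}\cdots K_{J_n}]$ vanishes unless every value of $J$ occurs an even number of times among $J_1,\dots,J_n$, and Isserlis' (Wick's) theorem gives $\E_K[K_{J_1}\cdots K_{J_n}]=\sum_P\prod_{(a,b)\in P}\delta_{J_a,J_b}$, where $P$ ranges over the $(n-1)!!$ perfect pairings of $\{1,\dots,n\}$. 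After interchanging the two sums I would fix a pairing $P$, force the two positions in each pair to carry a common index, and let the inner sum run over free tuples $(J^{(1)},\dots,J^{(n/2)})$ with each $J^{(p)}\in\binom{A}{4}$.

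The one genuinely algebraic step is evaluating $\tr(X_{J_1}\cdots X_{J_n})/2^{L/2}$ on such a paired tuple. Each subset $J^{(p)}$, and hence each of its four Majorana indices, appears exactly twice in the product, so every index $\chi_i$ occurs an even number of times overall; sorting the product using $\chi_i^2=1$ and the anticommutation relations collapses it to $\pm I$, whence the normalized trace equals $\pm1$. (For the adjacent pairing $(1,2),(3,4),\dots$ every factor is an $X_{J^{(p)}}^2=I$ and the sign is $+1$; other pairings can produce either sign, which is precisely why one expects an inequality rather than an equality.)

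Finally, bounding each summand by $1$ in absolute value, the inner sum over the $\binom{L}{4}^{n/2}$ tuples is at most $\binom{L}{4}^{n/2}$, and summing over the $(n-1)!!$ pairings gives
\[
\E_K\frac{\tr(H_{K,A}^n)}{2^{L/2}}\le\frac{C_4^n}{\binom{N}{4}^{n/2}}(n-1)!!\,\binom{L}{4}^{n/2}=C_4^{n/2}(n-1)!!,
\]
where the last equality uses $C_4=\binom{N}{4}/\binom{L}{4}$. I do not expect a serious obstacle: the only content beyond routine bookkeeping is the collapse of each paired Majorana product to $\pm I$, and the delicate sign cancellations among non-adjacent pairings, which would be needed for the exact moment, are exactly what the triangle inequality lets me ignore once only the upper bound is required.
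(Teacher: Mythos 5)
Your proposal is correct and is essentially the paper's argument: both discard the signs of the normalized traces $\tr(X_{J_1}\cdots X_{J_n})/2^{L/2}$ (bounding them by $1$, which is legitimate because the Gaussian moment coefficients are nonnegative) and then reduce the count to $(n-1)!!\binom{L}{4}^{n/2}$. The only cosmetic difference is that the paper packages the Wick/pairing count as the $n$-th moment of the single Gaussian $\sum_{J\in\binom{A}{4}}K_J$ with variance $\binom{L}{4}$, whereas you enumerate the pairings explicitly; the computation is the same.
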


\begin{proof}
We include the proof of this lemma for completeness. From the definition (\ref{eq:HKA}) of $H_{K,A}$,
\begin{equation}
\E_K\frac{\tr(H_{K,A}^n)}{2^{L/2}}=\frac{C_4^n}{\binom{N}{4}^{n/2}}\sum_{J_1,J_2,\ldots,J_n\in\binom{A}4}\E_K(K_{J_1}K_{J_2}\cdots K_{J_n})\frac{\tr(X_{J_1}X_{J_2}\cdots X_{J_n})}{2^{L/2}}.
\end{equation}
We observe that
\begin{gather}
    \E_K(K_{J_1}K_{J_2}\cdots K_{J_n})\ge0,\quad\forall J_1,J_2,\ldots,J_n,\\
    \left|\frac{\tr(X_{J_1}X_{J_2}\cdots X_{J_n})}{2^{L/2}}\right|\le1.
\end{gather}
Therefore,
\begin{multline}
\E_K\frac{\tr(H_{K,A}^n)}{2^{L/2}}\le\frac{C_4^n}{\binom{N}{4}^{n/2}}\sum_{J_1,J_2,\ldots,J_n\in\binom{A}4}\E_K(K_{J_1}K_{J_2}\cdots K_{J_n})=\frac{C_4^n}{\binom{N}{4}^{n/2}}\E_K\left(\left(\sum_{J\in\binom{A}4}K_J\right)^n\right)\\
=\frac{C_4^n}{\binom{N}{4}^{n/2}}\binom L4^{n/2}(n-1)!!=C_4^{n/2}(n-1)!!,
\end{multline}
where we used the fact that $\sum_{J\in\binom{A}4}K_J$ is a real Gaussian random variable with zero mean and variance $\binom{L}4$.
\end{proof}

\begin{lemma} [\cite{FTW21}] \label{l:H2}
For any integer $n\ge2$,
    \begin{equation}
        \var_K\frac{\tr(H_{K,A}^n)}{2^{L/2}}\le\frac{C_4^n}{\binom{L}4}2^nn!n^2.
    \end{equation}
\end{lemma}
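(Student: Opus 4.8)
The plan is to control the variance through the Gaussian Poincaré inequality together with the orthonormality of Majorana monomials, thereby reducing everything to the moment bound already established in Lemma~\ref{l:H1}. Since the $K_J$ are independent standard Gaussians and $\tr(H_{K,A}^n)/2^{L/2}$ is a polynomial in them, the Poincaré inequality (with its sharp constant $1$ for unit-variance Gaussians) gives
\begin{equation}
\var_K\frac{\tr(H_{K,A}^n)}{2^{L/2}}\le\sum_{J\in\binom{[N]}4}\E_K\left[\left(\frac{\partial}{\partial K_J}\frac{\tr(H_{K,A}^n)}{2^{L/2}}\right)^2\right].
\end{equation}
Only the $J\in\binom A4$ contribute, since $H_{K,A}$ depends on no other coefficients.

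First I would compute the derivative. By cyclicity of the trace and $\partial H_{K,A}/\partial K_J=C_4X_J/\sqrt{\binom N4}$ for $J\in\binom A4$,
\begin{equation}
\frac{\partial}{\partial K_J}\frac{\tr(H_{K,A}^n)}{2^{L/2}}=\frac{nC_4}{\sqrt{\binom N4}}\frac{\tr(H_{K,A}^{n-1}X_J)}{2^{L/2}},
\end{equation}
so the bound becomes $\frac{n^2C_4^2}{\binom N4}\sum_{J\in\binom A4}\E_K\big[(\tr(H_{K,A}^{n-1}X_J)/2^{L/2})^2\big]$.

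The key step is to collapse the sum over $J$ using Parseval. Equip the operator algebra on the $L$-fermion Hilbert space with the normalized Hilbert--Schmidt inner product $\langle P,Q\rangle=\tr(P^\dagger Q)/2^{L/2}$; the Majorana monomials form an orthonormal basis, and the Hermitian operators $\{X_J\}_{J\in\binom A4}$ are part of it. Bessel's inequality then gives
\begin{equation}
\sum_{J\in\binom A4}\left(\frac{\tr(H_{K,A}^{n-1}X_J)}{2^{L/2}}\right)^2=\sum_{J\in\binom A4}\left|\langle X_J,H_{K,A}^{n-1}\rangle\right|^2\le\|H_{K,A}^{n-1}\|^2=\frac{\tr(H_{K,A}^{2n-2})}{2^{L/2}},
\end{equation}
using that $H_{K,A}$ is Hermitian. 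Taking $\E_K$ and invoking Lemma~\ref{l:H1} with the even exponent $2n-2\ge2$ (here $n\ge2$ is used) bounds the right-hand side by $C_4^{n-1}(2n-3)!!$.

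Finally I would assemble the constants. Using $C_4=\binom N4/\binom L4$, and hence $C_4^2/\binom N4=C_4/\binom L4$, the estimate reads $\frac{C_4^n}{\binom L4}n^2(2n-3)!!$, and $(2n-3)!!\le(2n-1)!!=(2n)!/(2^nn!)\le2^nn!$ yields exactly $\frac{C_4^n}{\binom L4}2^nn!n^2$. The main obstacle is recognizing that the desired $1/\binom L4$ suppression is precisely what the Poincaré gradient term supplies: each derivative replaces one power of $H_{K,A}$ by a single $X_J$, contributing the factor $C_4/\sqrt{\binom N4}$ whose square produces $C_4/\binom L4$ after the Parseval collapse, while the orthonormality of the $X_J$ is what lets the sum over the $\binom L4$ coefficients fold back into a single trace moment controlled by Lemma~\ref{l:H1}. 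A naive Wick-contraction expansion of the variance, bounding each trace factor crudely by $1$, misses this gain and is off by a factor of order $\binom L4$, so the structural use of Poincaré plus orthonormality is essential.
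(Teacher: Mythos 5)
Your proof is correct, but it takes a genuinely different route from the paper's. The paper proves Lemma~\ref{l:H2} by brute-force Wick analysis: it expands the variance as a sum of covariances over tuples $(J_1,\ldots,J_{2n})$, splits the perfectly-paired tuples $P_n$ into the pieces $P_{n,0}$ (independent halves, zero covariance), $P_{n,1}$ (overlapping halves with $X_{J_1}\cdots X_{J_n}=\pm I$), and $P_{n,2}$ (traceless product), and then extracts the crucial $1/\binom{L}{4}$ suppression from a counting bound $|P_{n,1}|\le 2^n n!\binom{L}{4}^{n-1}$ together with the deficit $1-\prod_j(1-j/\binom{L}{4})$ for the non-paired terms. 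You instead get the same suppression structurally: Gaussian Poincar\'e converts the variance into a gradient term, each derivative trades one factor of $H_{K,A}$ for a single $X_J$ at cost $C_4/\sqrt{\tbinom{N}{4}}$, and Bessel's inequality in the normalized Hilbert--Schmidt inner product (where the $X_J$ are orthonormal) collapses the sum over $J$ into $\tr(H_{K,A}^{2n-2})/2^{L/2}$, which Lemma~\ref{l:H1} controls. All the individual steps check out: the cyclicity computation of the derivative, the identity $C_4^2/\binom{N}{4}=C_4/\binom{L}{4}$, the applicability of Lemma~\ref{l:H1} to the even exponent $2n-2\ge2$, and the final inequality $(2n-3)!!\le(2n-1)!!\le 2^nn!$, which in fact gives a slightly sharper constant than the stated $2^nn!n^2$. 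Your approach is shorter, avoids the delicate enumeration of $P_{n,1}$ entirely, and reuses Lemma~\ref{l:H1} as the only model-specific input; its cost is reliance on the Gaussian Poincar\'e inequality as an external tool, whereas the paper's argument is self-contained and elementary, and its covariance decomposition generalizes directly to the $G_{K,A}$ case (Lemma~\ref{l:G2}), where the coefficients $K_{J_1}K_{J_2}$ are no longer independent Gaussians and a Poincar\'e argument would need to be rerun at the level of the underlying $K_J$ with a more involved derivative computation. One small quibble: your closing remark slightly mischaracterizes the alternative --- the paper's Wick expansion is not the ``naive'' one you describe, since it too recovers the $1/\binom{L}{4}$ gain, just combinatorially rather than functional-analytically.
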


\begin{proof}
We include the proof of this lemma for completeness. For a tuple $(J_1,J_2,\ldots,J_{2n})\in\binom{A}{4}^{2n}$ (of even length), let
\begin{equation}
\# J:=|\{1\le j\le 2n:J_j=J\}|
\end{equation}
be the number of occurrences of a particular $J\in\binom{A}4$ in the tuple. Let
\begin{equation}
    P_n:=\{(J_1,J_2,\ldots,J_{2n}):\#J_1=\#J_2=\cdots=\#J_{2n}=2\}\subseteq\binom{A}4^{2n}
\end{equation}
so that 
\begin{equation}
    |P_n|=(2n-1)!!\prod_{j=0}^{n-1}\left(\binom{L}4-j\right).
\end{equation}
Let
\begin{equation}
    \Delta_k:=\{J_j:1\le j\le n\}\cap\{J_j:n+1\le j\le2n\}
\end{equation}
be the set of common elements of the first and second halves of the tuple. Let $I$ be the identity operator. $P_n$ can be decomposed as $P_n=P_{n,0}\cup P_{n,1}\cup P_{n,2}$, where
\begin{gather}
P_{n,0}=\{(J_1,J_2,\ldots,J_{2n})\in P_n:\Delta_k=\emptyset\},\\
P_{n,1}=\{(J_1,J_2,\ldots,J_{2n})\in P_n:\Delta_k\neq\emptyset,X_{J_1}X_{J_1}\cdots X_{J_n}=\pm I\},\\
P_{n,2}=\{(J_1,J_2,\ldots,J_{2n})\in P_n:X_{J_1}X_{J_1}\cdots X_{J_n}\neq\pm I\}
\end{gather}
are pairwise disjoint subsets.

We now estimate $|P_{n,1}|$. Let $m:=|\Delta_k|$. $(J_1,J_2,\ldots,J_{2n})\in P_n$ and $X_{J_1}X_{J_1}\cdots X_{J_n}=\pm I$ imply that the product of all elements of the set $\Delta_k$ is $\pm I$. Due to this constraint, for fixed $m$, the number of choices of $\Delta_k$ is upper bounded by $\binom{L}{4}^{m-1}/m!$. For fixed $\Delta_k$, the number of choices of $(\{J_j:1\le j\le n\},\{J_j:n+1\le j\le 2n\})$ is $\binom{\binom{L}{4}-m}{n-m}\binom{n-m}{\frac{n-m}2}$. For fixed $\{J_j:1\le j\le n\}$ and $\{J_j:n+1\le j\le 2n\}$, the number of choices of $(J_1,J_2,\ldots,J_{2n})$ is $(n!)^2/2^{n-m}$. Thus,
\begin{multline}
|P_{n,1}|\le\sum_{\substack{m:~0<m\le n\\n-m~\textnormal{even}}}\binom{\binom{L}{4}-m}{n-m}\binom{n-m}{\frac{n-m}2}\frac{(n!)^2\binom{L}{4}^{m-1}}{2^{n-m}m!}\le\sum_{\substack{m:~0<m\le n\\n-m~\textnormal{even}}}\frac{\binom{L}{4}^{n-m}}{(n-m)!}\frac{(n!)^2\binom{L}{4}^{m-1}}{m!}\\
=n!\binom{L}{4}^{n-1}\sum_{\substack{m:~0<m\le n\\n-m~\textnormal{even}}}\binom{n}{m}\le2^nn!\binom{L}4^{n-1}.
\end{multline}

From the definition (\ref{eq:HKA}) of $H_{K,A}$,
\begin{equation}
    \var_K\frac{\tr(H_{K,A}^n)}{2^{L/2}}=\frac{C_4^{2n}}{2^L\binom{N}4^n}\left(\sum_{(J_1,J_2,\ldots,J_{2n})\in P_n}+\sum_{(J_1,J_2,\ldots,J_{2n})\in \binom{A}4^{2n}\setminus P_n}\right)\cdots=:V_1^H+V_2^H,
\end{equation}
where the summand, denoted by ``$\cdots$,'' is
\begin{equation}
    \cov(K_{J_1}K_{J_2}\cdots K_{J_{n}},K_{J_{n+1}}K_{J_{n+2}}\cdots K_{J_{2n}})\tr(X_{J_1}X_{J_2}\cdots X_{J_n})\tr(X_{J_{n+1}}X_{J_{n+2}}\cdots X_{J_{2n}}).
\end{equation}
If $(J_1,J_2,\ldots,J_{2n})\in P_{n,0}$, then $K_{J_1}K_{J_2}\cdots K_{J_{n}}$ and $K_{J_{n+1}}K_{J_{n+2}}\cdots K_{J_{2n}}$ are independent from each other so that $\cov(K_{J_1}K_{J_2}\cdots K_{J_{n}},K_{J_{n+1}}K_{J_{n+2}}\cdots K_{J_{2n}})=0$. If $(J_1,J_2,\ldots,J_{2n})\in P_{n,2}$, then $\tr(K_{J_1}K_{J_2}\cdots K_{J_n})=0$. Hence,
\begin{equation} \label{eq:V1H}
    |V_1^H|\le\frac{C_4^{2n}}{\binom{N}4^n}\sum_{(J_1,J_2,\ldots,J_{2n})\in P_{n,1}}\E_K(K_{J_1}K_{J_2}\cdots K_{J_{2n}})=\frac{C_4^{2n}|P_{n,1}|}{\binom{N}4^n}\le2^nn!\frac{C_4^n}{\binom{L}4}.
\end{equation}
Moreover,
\begin{align} \label{eq:V2H}
    |V_2^H|&\le\frac{C_4^{2n}}{\binom{N}4^n}\sum_{(J_1,J_2,\ldots,J_{2n})\in\binom{A}4^{2n}\setminus P_n}
    |\cov(K_{J_1}K_{J_2}\cdots K_{J_{n}},K_{J_{n+1}}K_{J_{n+2}}\cdots K_{J_{2n}})|\nonumber\\
    &\le\frac{C_4^{2n}}{\binom{N}4^n}\sum_{(J_1,J_2,\ldots,J_{2n})\in\binom{A}4^{2n}\setminus P_n}
    \E(K_{J_1}K_{J_2}\cdots  K_{J_{2n}})\nonumber\\
    &=\frac{C_4^{2n}}{\binom{N}4^n}\left(\sum_{(J_1,J_2,\ldots,J_{2n})\in\binom{A}4^{2n}}-\sum_{(J_1,J_2,\ldots,J_{2n})\in P_n}\right)
    \E(K_{J_1}K_{J_2}\cdots  K_{J_{2n}})\nonumber\\
    &=\frac{C_4^{2n}}{\binom{N}4^n}\left(\E\left(\left(\sum_{J\in\binom{A}4}K_J\right)^{2n}\right)-|P_n|\right)=C_4^n(2n-1)!!\left(1-\prod_{j=0}^{n-1}\frac{\binom{L}4-j}{\binom{L}4}\right)\nonumber\\
    &\le C_4^n(2n-1)!!\sum_{j=0}^{n-1}\frac{j}{\binom{L}4}=C_4^n(2n-1)!!\frac{n(n-1)}{2\binom{L}4}.
\end{align}
Therefore,
\begin{equation}
    \var_K\frac{\tr(H_{K,A}^n)}{2^{L/2}}\le|V_1^H|+|V_2^H|\le\frac{C_4^n}{\binom{L}4}\left(2^nn!+(2n-1)!!\frac{n(n-1)}2\right)\le\frac{C_4^n}{\binom{L}4}2^nn!n^2.
\end{equation}
\end{proof}

Recall the definition (\ref{eq:defG}) of $G_K$. Let
\begin{equation} \label{eq:GKA}
G_{K,A}=\frac1{\binom{N}4}\sum_{J_1,J_2\in\binom{A}4:J_1\neq J_2}C_{|J_1\cup J_2|}K_{J_1}K_{J_2}X_{J_1}X_{J_2}.
\end{equation}
Note that $5\le|J_1\cup J_2|\le8$ for $J_1\neq J_2$. $C_5<C_6<C_7<C_8$ are chosen such that
\begin{equation} \label{eq:ave}
    \E_{|A|=L}G_{K,A}\otimes I_{\bar A}=G_K.
\end{equation}

\begin{lemma} \label{l:G1}
For any positive even integer $n$,
\begin{equation}
\E_K\frac{\tr(G_{K,A}^n)}{2^{L/2}}\le C^n(2n-1)!!.
\end{equation}
\end{lemma}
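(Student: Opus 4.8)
The plan is to mimic the moment bound for $H_{K,A}$ (Lemma \ref{l:H1}) as closely as possible, but now the summand involves the product coefficients $C_{|J_1\cup J_2|}$ and products of \emph{pairs} of distinct Majorana monomials. First I would expand $\tr(G_{K,A}^n)/2^{L/2}$ directly from the definition (\ref{eq:GKA}): the $n$th power is a sum over $n$ ordered pairs $(J_1^{(i)},J_2^{(i)})$ with $J_1^{(i)}\neq J_2^{(i)}$, each carrying a factor $C_{|J_1^{(i)}\cup J_2^{(i)}|}/\binom{N}4$, a Gaussian-moment factor $\E_K$ of the product of all $2n$ random variables, and a normalized trace of the product of the $2n$ monomials $X_{J}$. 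As in the proof of Lemma \ref{l:H1}, the two crucial observations are that the Gaussian moment $\E_K(\prod K_J)\ge0$ for Gaussians with zero mean (nonzero only when every index appears an even number of times), and that the normalized trace is bounded in absolute value by $1$. Together these let me replace the signed trace by $+1$ and drop the constraint imposed by the trace, yielding an upper bound that is a pure combinatorial/Gaussian sum.

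The second step is to control the coefficient $C_{|J_1\cup J_2|}$. Since $C_5<C_6<C_7<C_8=C\cdot C_4$ and, more to the point, $C_{|J_1\cup J_2|}\le C_8$ always (as $|J_1\cup J_2|\le 8$), I would bound every factor $C_{|J_1^{(i)}\cup J_2^{(i)}|}$ by $C_8=C\,C_4$. This pulls an overall factor of $C^n C_4^n$ out of the sum. What remains inside is exactly the structure $\frac{C_4^n}{\binom{N}4^n}\sum \E_K(\prod_{i} K_{J_1^{(i)}}K_{J_2^{(i)}})$ over all tuples of $2n$ elements of $\binom{A}4$ with the diagonal pairs $J_1^{(i)}=J_2^{(i)}$ \emph{excluded}. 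Dropping that exclusion only increases the (nonnegative) sum, so I can bound it by the unrestricted sum, which is precisely $\frac{C_4^n}{\binom{N}4^n}\E_K\bigl((\sum_{J\in\binom{A}4}K_J)^{2n}\bigr)$. By the Gaussian moment formula already used in Lemma \ref{l:H1}, this equals $\frac{C_4^n}{\binom{N}4^n}\binom{L}4^{n}(2n-1)!!$, and since $C_4=\binom{N}4/\binom{L}4$ one has $C_4^n\binom{L}4^{n}=\binom{N}4^n$, so the whole expression collapses to $(2n-1)!!$. Assembling the prefactor gives the claimed bound $C^n(2n-1)!!$.

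The main obstacle I anticipate is bookkeeping rather than any deep inequality: I must make sure the index relabeling is consistent when I pass from the $n$-fold product of pairs (a sum over $2n$ labeled slots, organized as $n$ pairs) to the flat sum over $2n$ copies of $\binom{A}4$, so that the Gaussian factor $\E_K(\prod K_J)$ and the counting of $(2n-1)!!$ pairings genuinely match. A subtle point worth checking is that bounding $C_{|J_1\cup J_2|}\le C_8$ is legitimate here even though on the diagonal $J_1=J_2$ one would have $|J_1\cup J_2|=4$; but those diagonal terms are exactly the ones excluded in $G_{K,A}$, and since we are only \emph{adding} nonnegative terms when we drop the exclusion, no underestimate of $C$ can cause trouble. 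I would verify that the normalized trace bound $|{\tr(X_{J_1}\cdots X_{J_{2n}})}/2^{L/2}|\le 1$ holds verbatim for products of the $X_J$ (each $X_J$ being a product of four distinct Majorana operators, hence unitary with $X_J^2=\pm I$), which is the same fact invoked in Lemma \ref{l:H1}.
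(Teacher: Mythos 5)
Your proposal is correct and follows essentially the same route as the paper's proof: expand the $n$th power, use nonnegativity of the Gaussian moments and the bound $|\tr(X_{J_1}\cdots X_{J_{2n}})|/2^{L/2}\le1$, bound each coefficient $C_{|J_1\cup J_2|}$ by $C_8$, drop the off-diagonal constraints, and evaluate the resulting Gaussian moment as $\binom{L}4^n(2n-1)!!$. The only cosmetic difference is that you factor $C_8=C\,C_4$ at the outset while the paper carries $C_8^n/\binom{N}4^n$ to the end; the arithmetic is identical.
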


\begin{proof}
The proof of Lemma \ref{l:G1} is similar to that of Lemma \ref{l:H1}. From the definition (\ref{eq:GKA}) of $G_{K,A}$,
\begin{multline}
    \E_K\frac{\tr(G_{K,A}^n)}{2^{L/2}}=\frac1{{\binom{N}4}^n}\sum_{\substack{J_1,J_2,\ldots,J_{2n}\in\binom{A}4\\J_1\neq J_2;J_3\neq J_4;\ldots;J_{2n-1}\neq J_{2n}}}C_{|J_1\cup J_2|}C_{|J_3\cup J_4|}\cdots C_{|J_{2n-1}\cup J_{2n}|}\\
    \times\E_K(K_{J_1}K_{J_2}\cdots K_{J_{2n}})\frac{\tr(X_{J_1}X_{J_2}\cdots X_{J_{2n}})}{2^{L/2}}.
\end{multline}
Using
\begin{equation}
    C_{|J_1\cup J_2|}C_{|J_3\cup J_4|}\cdots C_{|J_{2n-1}\cup J_{2n}|}\le C_8^n,
\end{equation}
we obtain
\begin{align} \label{eq:moment}
&\E_K\frac{\tr(G_{K,A}^n)}{2^{L/2}}\le\frac{C_8^n}{{\binom{N}4}^n}\sum_{\substack{J_1,J_2,\ldots,J_{2n}\in\binom{A}4\\J_1\neq J_2;J_3\neq J_4;\ldots;J_{2n-1}\neq J_{2n}}}\E_K(K_{J_1}K_{J_2}\cdots K_{J_{2n}})\nonumber\\
&\le\frac{C_8^n}{{\binom{N}4}^n}\sum_{J_1,J_2,\ldots,J_{2n}\in\binom{A}4}\E_K(K_{J_1}K_{J_2}\cdots K_{J_{2n}})=\frac{C_8^n}{{\binom{N}4}^n}\E_K\left(\left(\sum_{J\in\binom{A}4}K_J\right)^{2n}\right)\nonumber\\
&=\frac{C_8^n}{{\binom{N}4}^n}{\binom{L}4}^n(2n-1)!!=C^n(2n-1)!!.
\end{align}
\end{proof}

\begin{lemma} \label{l:G2}
For any integer $n\ge2$,
    \begin{equation}
        \var_K\frac{\tr(G_{K,A}^n)}{2^{L/2}}\le\frac{C^{2n}}{\binom{L}4}2^{2n}(2n)!n^2.
    \end{equation}
\end{lemma}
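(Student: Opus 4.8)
The plan is to follow the proof of Lemma~\ref{l:H2} almost line by line, with two structural changes. Because each factor of $G_{K,A}$ in~(\ref{eq:GKA}) carries \emph{two} Gaussian variables rather than one, the moment $\tr(G_{K,A}^n)$ is indexed by $2n$-tuples and its variance by $4n$-tuples; and each factor carries a coefficient $C_{|J_a\cup J_b|}\le C_8$, which I will bound uniformly by $C_8$. First I would expand the variance, exactly as in the $H$ case, as $V_1^G+V_2^G$, where the full sum runs over all $4n$-tuples $(J_1,\ldots,J_{4n})$ respecting the pairing constraints $J_{2i-1}\neq J_{2i}$, the summand is
\[
(C\text{-product})\cdot\cov\left(K_{J_1}\cdots K_{J_{2n}},\,K_{J_{2n+1}}\cdots K_{J_{4n}}\right)\cdot\tr\left(X_{J_1}\cdots X_{J_{2n}}\right)\tr\left(X_{J_{2n+1}}\cdots X_{J_{4n}}\right)
\]
divided by $2^L\binom{N}{4}^{2n}$, and $Q_n$ denotes those tuples in which every value of $\binom{A}{4}$ appears exactly twice; $V_1^G$ is the sum over $Q_n$ and $V_2^G$ the sum over its complement. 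Packaging the centered second moment as a single covariance automatically subtracts $\bigl(\E_K\tr(G_{K,A}^n)/2^{L/2}\bigr)^2$.

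For $V_1^G$ I would split $Q_n=Q_{n,0}\cup Q_{n,1}\cup Q_{n,2}$ exactly as the paper splits $P_n$. Tuples in $Q_{n,0}$ (no value shared between the two halves) contribute nothing, since disjoint halves involve independent variables with vanishing covariance; tuples in $Q_{n,2}$ (first-half operator product $\neq\pm I$) contribute nothing, since one of the traces vanishes; and on $Q_{n,1}$ the shared values occur once in each half, which forces $\E_K$ of each half to vanish while the full $4n$-fold moment equals $1$, so the covariance equals $1$ and each normalized trace has modulus $1$. Hence $|V_1^G|\le C_8^{2n}\binom{N}{4}^{-2n}|Q_{n,1}|$. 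Counting $|Q_{n,1}|$ by the overlap parameter $m=|\Delta|$ with the three stages of the paper (choosing $\Delta$ subject to $\prod_{J\in\Delta}X_J=\pm I$, splitting the non-shared values between the two halves, then ordering within each half), but now with $2n$ distinct values among $4n$ slots, I expect the bound $|Q_{n,1}|\le 2^{2n}(2n)!\binom{L}{4}^{2n-1}$, i.e.\ the estimate $|P_{n,1}|\le 2^nn!\binom{L}{4}^{n-1}$ with $n\mapsto 2n$. Since $C_8^{2n}\binom{L}{4}^{2n}/\binom{N}{4}^{2n}=(C_8/C_4)^{2n}=C^{2n}$, this yields $|V_1^G|\le\frac{C^{2n}}{\binom{L}{4}}2^{2n}(2n)!$.

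For $V_2^G$ I would use $|\cov|\le\E_K(K_{J_1}\cdots K_{J_{4n}})$ — valid because, by Wick's theorem, the relevant Gaussian moments and their covariance are all non-negative — bound each normalized trace by $1$, and relax the pairing constraints (which only enlarges a sum of non-negative terms). The remaining sum telescopes exactly as in~(\ref{eq:V2H}): with $\E_K\bigl((\sum_{J\in\binom{A}{4}}K_J)^{4n}\bigr)=\binom{L}{4}^{2n}(4n-1)!!$ and $|Q_n|=(4n-1)!!\prod_{j=0}^{2n-1}(\binom{L}{4}-j)$, one gets
\[
|V_2^G|\le C^{2n}(4n-1)!!\left(1-\prod_{j=0}^{2n-1}\frac{\binom{L}{4}-j}{\binom{L}{4}}\right)\le C^{2n}(4n-1)!!\frac{n(2n-1)}{\binom{L}{4}}.
\]

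Adding the two estimates gives $\var_K\tr(G_{K,A}^n)/2^{L/2}\le\frac{C^{2n}}{\binom{L}{4}}\bigl(2^{2n}(2n)!+(4n-1)!!\,n(2n-1)\bigr)$, and the lemma follows once $(4n-1)!!\,n(2n-1)\le 2^{2n}(2n)!\,(n^2-1)$, i.e.\ $\binom{4n}{2n}n(2n-1)\le 2^{4n}(n^2-1)$, which holds for every $n\ge 2$ by the central-binomial estimate $\binom{4n}{2n}=O(2^{4n}/\sqrt{n})$. The main obstacle I anticipate is the combinatorial count of $|Q_{n,1}|$: I must check that the paper's three-stage argument survives the passage to $2n$ distinct values and the relaxation of the pairing constraints, and that replacing the non-uniform coefficients $C_{|J_a\cup J_b|}$ by $C_8$ does not spoil the telescoping that collapses $C_8^{2n}\binom{L}{4}^{2n}/\binom{N}{4}^{2n}$ to $C^{2n}$. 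The conceptual point flagged in the proof sketch — that the coefficients $K_{J_1}K_{J_2}$ of $G_K$ are mutually dependent — never actually obstructs anything here, because every quantity is expressed through the \emph{independent} underlying Gaussians $K_J$, whose Wick structure supplies the non-negativity used throughout.
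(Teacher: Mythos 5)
Your proposal is correct and follows essentially the same route as the paper: the same decomposition of the variance into $V_1^G+V_2^G$ via the set of ``all values appearing exactly twice'' (your $Q_n$ is the paper's $P_{2n}$), the same three-way split to isolate the nonvanishing covariances, the same counting bound $|P_{2n,1}|\le 2^{2n}(2n)!\binom{L}{4}^{2n-1}$ obtained by substituting $n\mapsto 2n$, and the same telescoping estimate for $V_2^G$. The details you flag as potential obstacles (uniformly bounding the coefficients by $C_8$, relaxing the pairing constraints, and the final numerical inequality) all resolve exactly as you anticipate.
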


\begin{proof}
The proof of Lemma \ref{l:G2} is similar to that of Lemma \ref{l:H2}. From the definition (\ref{eq:GKA}) of $G_{K,A}$,
\begin{equation}
    \var_K\frac{\tr(G_{K,A}^n)}{2^{L/2}}=\frac{1}{2^L\binom{N}4^{2n}}\left(\sum_{\substack{(J_1,J_2,\ldots,J_{4n})\in P_{2n}\\J_1\neq J_2;J_3\neq J_4;\ldots;J_{4n-1}\neq J_{4n}\\}}+\sum_{\substack{(J_1,J_2,\ldots,J_{4n})\in\binom{A}4^{4n}\setminus P_{2n}\\J_1\neq J_2;J_3\neq J_4;\ldots;J_{4n-1}\neq J_{4n}\\}}\right)\cdots=:V_1^G+V_2^G,
\end{equation}
where the summand, denoted by ``$\cdots$,'' is
\begin{multline}
C_{|J_1\cup J_2|}C_{|J_3\cup J_4|}\cdots C_{|J_{4n-1}\cup J_{4n}|}\cov(K_{J_1}K_{J_2}\cdots K_{J_{2n}},K_{J_{2n+1}}K_{J_{2n+2}}\cdots K_{J_{4n}})\\
    \times\tr(X_{J_1}X_{J_2}\cdots X_{J_{2n}})\tr(X_{J_{2n+1}}X_{J_{2n+2}}\cdots X_{J_{4n}}).
\end{multline}
Similar to (\ref{eq:V1H}),
\begin{equation}
    |V_1^G|\le\frac{1}{\binom{N}4^{2n}}\sum_{(J_1,J_2,\ldots,J_{4n})\in P_{2n,1}}C_8^{2n}=\frac{C_8^{2n}|P_{2n,1}|}{\binom{N}4^{2n}}\le2^{2n}(2n)!\frac{C^{2n}}{\binom{L}4}.
\end{equation}
Similar to (\ref{eq:V2H}),
\begin{align}
    |V_2^G|&\le\frac{1}{\binom{N}4^{2n}}\sum_{(J_1,J_2,\ldots,J_{4n})\in\binom{A}4^{4n}\setminus P_{2n}}C_8^{2n}
    |\cov(K_{J_1}K_{J_2}\cdots K_{J_{2n}},K_{J_{2n+1}}K_{J_{2n+2}}\cdots K_{J_{4n}})|\nonumber\\
    &\le\frac{C_8^{2n}}{\binom{N}4^{2n}}\sum_{(J_1,J_2,\ldots,J_{4n})\in\binom{A}4^{4n}\setminus P_{2n}}
    \E(K_{J_1}K_{J_2}\cdots  K_{J_{4n}})\nonumber\\
    &=\frac{C_8^{2n}}{\binom{N}4^{2n}}\left(\sum_{(J_1,J_2,\ldots,J_{4n})\in\binom{A}4^{4n}}-\sum_{(J_1,J_2,\ldots,J_{4n})\in P_{2n}}\right)
    \E(K_{J_1}K_{J_2}\cdots  K_{J_{4n}})\nonumber\\
    &=\frac{C_8^{2n}}{\binom{N}4^{2n}}\left(\E\left(\left(\sum_{J\in\binom{A}4}K_J\right)^{4n}\right)-|P_{2n}|\right)=C^{2n}(4n-1)!!\left(1-\prod_{j=0}^{2n-1}\frac{\binom{L}4-j}{\binom{L}4}\right)\nonumber\\
    &\le C^{2n}(4n-1)!!\sum_{j=0}^{2n-1}\frac{j}{\binom{L}4}=C^{2n}(4n-1)!!\frac{n(2n-1)}{\binom{L}4}.
\end{align}
Therefore,
\begin{equation}
    \var_K\frac{\tr(G_{K,A}^n)}{2^{L/2}}\le|V_1^G|+|V_2^G|\le\frac{C^{2n}}{\binom{L}4}\big(2^{2n}(2n)!+(4n-1)!!n(2n-1)\big)\le\frac{C^{2n}}{\binom{L}4}2^{2n}(2n)!n^2.
\end{equation}
\end{proof}

\begin{lemma}
For any integer $n\ge2$,
\begin{gather}
\Pr_K\left(\left|\E_{|A|=L}\frac{\tr(H_{K,A}^n)}{2^{L/2}}\right|\le (2C_4)^{n/2}n^2\sqrt{n!}\right)\ge1-\frac{O(1)}{n^2L^2\max\{L^2,N\}},\\
\Pr_K\left(\left|\E_{|A|=L}\frac{\tr(G_{K,A}^n)}{2^{L/2}}\right|\le (2C)^nn^2\sqrt{(2n)!}\right)\ge1-\frac{O(1)}{n^2L^2\max\{L^2,N\}}.
\end{gather}
\end{lemma}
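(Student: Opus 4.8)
The plan is to apply Chebyshev's inequality to $\E_{|A|=L}\tr(H_{K,A}^n)/2^{L/2}$, regarded as a function of the couplings $K$ (the argument for $G_{K,A}$ being identical). Write $T_A:=\tr(H_{K,A}^n)/2^{L/2}$. First I would control the mean. By Lemma \ref{l:H1}, $\E_K T_A\le C_4^{n/2}(n-1)!!$ uniformly in $A$ for even $n$ (and $\E_K T_A=0$ for odd $n$, since odd moments of the centered Gaussians $K_J$ vanish), so, exchanging the two averages,
\[
\bigl|\E_K\E_{|A|=L}T_A\bigr|=\E_{|A|=L}\,\E_K T_A\le C_4^{n/2}(n-1)!!\le C_4^{n/2}\sqrt{n!},
\]
using $((n-1)!!)^2\le n!$. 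The threshold $(2C_4)^{n/2}n^2\sqrt{n!}$ is at least twice this mean, so it suffices to bound the probability that $\E_{|A|=L}T_A$ exceeds its mean by more than half the threshold.

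The crux is the variance. Expanding the subsystem average, with $A,A'$ independent and uniform over size-$L$ subsystems,
\[
\var_K\E_{|A|=L}T_A=\E_{A,A'}\cov_K(T_A,T_{A'}).
\]
Here $T_A$ is a function of $\{K_J:J\in\binom{A}{4}\}$ only, and $\binom{A}{4}\cap\binom{A'}{4}=\binom{A\cap A'}{4}=\emptyset$ whenever $|A\cap A'|<4$; in that case $T_A$ and $T_{A'}$ are functions of disjoint, hence independent, families of couplings, so $\cov_K(T_A,T_{A'})=0$. Combining this vanishing with the Cauchy--Schwarz bound $\cov_K(T_A,T_{A'})\le\var_K T_A$ (the single-subsystem variances coinciding by the permutation symmetry of the i.i.d.\ couplings) gives
\[
\var_K\E_{|A|=L}T_A\le\var_K T_A\cdot\Pr_{A,A'}\bigl(|A\cap A'|\ge4\bigr).
\]
I would bound the overlap probability by Markov's inequality together with $\E_{A,A'}\binom{|A\cap A'|}{4}=\binom{N}{4}/C_4^2=\binom{L}{4}^2/\binom{N}{4}$ (from $\Pr(J\subseteq A\cap A')=C_4^{-2}$ for each $J\in\binom{[N]}{4}$), so that $\Pr_{A,A'}(|A\cap A'|\ge4)\le\min\{1,\binom{L}{4}^2/\binom{N}{4}\}$. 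With Lemma \ref{l:H2} this yields
\[
\var_K\E_{|A|=L}T_A\le C_4^n2^n n!\,n^2\min\left\{\frac{1}{\binom{L}{4}},\frac{\binom{L}{4}}{\binom{N}{4}}\right\}\le\frac{O(1)\,C_4^n2^n n!\,n^2}{L^2\max\{L^2,N\}},
\]
the last step because $\binom{L}{4}=\Theta(L^4)$ and $\binom{N}{4}=\Theta(N^4)$.

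Chebyshev's inequality then bounds the failure probability by $\var_K\E_{|A|=L}T_A$ divided by $\tfrac14(2C_4)^n n^4 n!$, which simplifies to $O(1)/(n^2 L^2\max\{L^2,N\})$; the $n^2$ from Lemma \ref{l:H2} is exactly cancelled by the $n^2$ built into the threshold. The bound for $G$ follows verbatim, with $\tr(G_{K,A}^n)/2^{L/2}$ in place of $T_A$, the mean estimate $C^n(2n-1)!!\le C^n\sqrt{(2n)!}$ from Lemma \ref{l:G1}, the variance estimate from Lemma \ref{l:G2}, and the same overlap argument (since $G_{K,A}$ is again a function of $\{K_J:J\in\binom{A}{4}\}$ only, the covariance still vanishes once $|A\cap A'|<4$). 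I expect the variance step to be the only real obstacle: the naive bound $\var_K\E_{|A|=L}T_A\le\var_K T_A=O(1)/\binom{L}{4}$ (up to the common prefactor) is too weak when $N\gg L^2$, and it is precisely the localization of the variance to overlapping subsystem pairs---quantified through $\E_{A,A'}\binom{|A\cap A'|}{4}=\binom{L}{4}^2/\binom{N}{4}$---that improves $1/L^4$ to $1/(L^2\max\{L^2,N\})$.
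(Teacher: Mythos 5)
Your proof is correct and follows essentially the same route as the paper's: write $\var_K\E_{|A|=L}\tr(H_{K,A}^n)/2^{L/2}$ as an average of covariances over pairs $(A,A')$, drop the pairs whose coupling sets are disjoint, control the remaining pairs by the single-subsystem variance from Lemma \ref{l:H2} (resp.\ Lemma \ref{l:G2}) times an overlap probability, and finish with Chebyshev against a threshold that dominates the mean from Lemma \ref{l:H1} (resp.\ Lemma \ref{l:G1}). The only divergence is in the overlap probability---the paper keeps all pairs with $A\cap A'\neq\emptyset$ and uses $1-\binom{N-L}{L}/\binom{N}{L}\le\min\{1,2L^2/N\}$, whereas you keep only pairs with $|A\cap A'|\ge4$ and apply Markov to get $\min\bigl\{1,\binom{L}{4}^2/\binom{N}{4}\bigr\}$---which is marginally sharper for $L\ll\sqrt N$ but lands on the same stated bound.
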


\begin{proof}
Using Lemma \ref{l:H2},
\begin{align}
    &\var_K\E_{|A|=L}\frac{\tr(H_{K,A}^n)}{2^{L/2}}=\frac1{\binom{N}L^2}\sum_{|A|=L,|A'|=L}\cov\left(\frac{\tr(H_{K,A}^n)}{2^{L/2}},\frac{\tr(H_{K,A'}^n)}{2^{L/2}}\right)\nonumber\\
    &\le\frac1{\binom{N}L^2}\sum_{|A|=L,|A'|=L,A\cap A'\neq\emptyset}\frac12\left(\var\frac{\tr(H_{K,A}^n)}{2^{L/2}}+\var\frac{\tr(H_{K,A'}^n)}{2^{L/2}}\right)\nonumber\\
    &=\E_{|A|=L}\frac{|\{A':|A'|=L,A\cap A'\neq\emptyset\}|}{\binom{N}L}\var\frac{\tr(H_{K,A}^n)}{2^{L/2}}=\left(1-\frac{\binom{N-L}L}{\binom{N}L}\right)\E_{|A|=L}\var_K\frac{\tr(H_{K,A}^n)}{2^{L/2}}\nonumber\\
    &\le\min\{1,2L^2/N\}\frac{C_4^n}{\binom{L}4}2^nn!n^2.
\end{align}
Similarly, using Lemma \ref{l:G2},
\begin{equation}    \var_K\E_{|A|=L}\frac{\tr(G_{K,A}^n)}{2^{L/2}}\le\min\{1,2L^2/N\}\frac{C^{2n}}{\binom{L}4}2^{2n}(2n)!n^2.
\end{equation}
We complete the proof using Chebyshev inequality.
\end{proof}

To prove Theorem \ref{t:main}, it suffices to prove
\begin{equation} \label{eq:2prove}
\E_{|A|=L}S(\psi_{K,A})=\frac{L\ln2}2-\Omega(L^8/N^8)
\end{equation}
for any eigenstate $|\psi_K\rangle$ of $H_K$ under the assumptions that
\begin{gather}
\frac{\tr(H_K^2)}{2^{N/2}}\ge1/2,\label{eq:H2}\\
\left|\E_{|A|=L}\frac{\tr(H_{K,A}^n)}{2^{L/2}}\right|\le (2C_4)^{n/2}n^2\sqrt{n!},\quad\forall n\ge2,\label{eq:aH}\\
\left|\E_{|A|=L}\frac{\tr(G_{K,A}^n)}{2^{L/2}}\right|\le (2C)^nn^2\sqrt{(2n)!},\quad\forall n\ge2.\label{eq:aG}
\end{gather}

Let
\begin{equation}
\sigma^H_{K,A}(\beta):=e^{-\beta H_{K,A}}/\tr(e^{-\beta H_{K,A}}),\quad\sigma_{K,A}^G(\beta):=e^{-\beta G_{K,A}}/\tr(e^{-\beta G_{K,A}})    
\end{equation}
be the thermal states of $H_{K,A}$ and $G_{K,A}$, respectively, at inverse temperature $\beta$. Let
\begin{equation}
\mathcal E_K^H(\beta):=\E_{|A|=L}\tr(\sigma_{K,A}^H(\beta)H_{K,A}),\quad\mathcal E_K^G(\beta):=\E_{|A|=L}\tr(\sigma_{K,A}^G(\beta)G_{K,A}).
\end{equation}
so that $\mathcal E_K^H(0)=\mathcal E_K^G(0)=0$. As long as $K$ has at least two nonzero entries, both $\mathcal E_K^H$ and $\mathcal E_K^G$ are strictly monotonically decreasing.

\begin{lemma} \label{l:energy}
Let $c=\Theta(1)$ be a sufficiently small positive constant. For $0\le\beta\le c/\sqrt{C_4}$,
\begin{equation}
    \mathcal E_K^H(\beta)\ge-12\beta C_4-O(\beta^2C_4^{3/2}).
\end{equation}
For $0\le\beta\le c/C$,
\begin{equation}
    \mathcal E_K^G(\beta)\ge-80\beta C^2-O(\beta^2C^3).
\end{equation}
\end{lemma}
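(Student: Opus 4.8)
The plan is to avoid the thermal ratio $\tr(\sigma^H_{K,A}(\beta)H_{K,A})=\tr(H_{K,A}e^{-\beta H_{K,A}})/\tr(e^{-\beta H_{K,A}})$, which cannot be averaged over $A$ termwise, and to reduce the whole estimate to the averaged moments $M_n:=\E_{|A|=L}\tr(H_{K,A}^n)/2^{L/2}$ that are already controlled by (\ref{eq:aH}). The device is to discard the partition function in the denominator in favor of a trivial lower bound. First I would observe that, because each $X_J$ is a product of four distinct Majoranas, both $H_{K,A}$ and $G_{K,A}$ are Hermitian and traceless; hence, writing $\{\lambda_i\}$ for the eigenvalues of $H_{K,A}$, convexity of $t\mapsto e^{-\beta t}$ gives $\tr(e^{-\beta H_{K,A}})=\sum_i e^{-\beta\lambda_i}\ge 2^{L/2}e^{-\beta\tr(H_{K,A})/2^{L/2}}=2^{L/2}$, while $f(\beta):=\tr(H_{K,A}e^{-\beta H_{K,A}})$ satisfies $f(0)=\tr H_{K,A}=0$ and $f'(\beta)=-\tr(H_{K,A}^2e^{-\beta H_{K,A}})\le0$, so $f(\beta)\le0$ for $\beta\ge0$. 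Since the numerator is nonpositive while the denominator is at least $2^{L/2}$, replacing the denominator by $2^{L/2}$ gives a lower bound, so for every subsystem $A$ and every $\beta\ge0$,
\begin{equation}
\tr(\sigma^H_{K,A}(\beta)H_{K,A})\ge\frac{\tr(H_{K,A}e^{-\beta H_{K,A}})}{2^{L/2}}.
\end{equation}

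Next I would average over the $\binom NL$ subsystems — a finite average, so it commutes with the everywhere-convergent power series in $\beta$ — to obtain
\begin{equation}
\mathcal E_K^H(\beta)\ge\E_{|A|=L}\frac{\tr(H_{K,A}e^{-\beta H_{K,A}})}{2^{L/2}}=\sum_{m=1}^\infty\frac{(-\beta)^m}{m!}M_{m+1},
\end{equation}
where the $m=0$ term vanishes because $M_1=0$ by tracelessness. The $m=1$ term is $-\beta M_2$, and (\ref{eq:aH}) at $n=2$ gives $0\le M_2\le(2C_4)\cdot4\cdot\sqrt2=2^{7/2}C_4<12C_4$, so this term is at least $-12\beta C_4$. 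For the tail I would bound, using (\ref{eq:aH}) and rescaling $x:=\beta\sqrt{2C_4}$,
\begin{equation}
\Bigl|\sum_{m=2}^\infty\frac{(-\beta)^m}{m!}M_{m+1}\Bigr|\le\sqrt{2C_4}\sum_{m=2}^\infty x^m\frac{(m+1)^{5/2}}{\sqrt{m!}},
\end{equation}
which converges and is dominated by its $m=2$ term once $\beta\le c/\sqrt{C_4}$, hence is $O(\beta^2C_4^{3/2})$. Together these give $\mathcal E_K^H(\beta)\ge-12\beta C_4-O(\beta^2C_4^{3/2})$.

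The bound for $\mathcal E_K^G$ follows from the same three moves applied to $G_{K,A}$, which is Hermitian and traceless because $X_{J_1}X_{J_2}$ is traceless for $J_1\neq J_2$; this yields $\mathcal E_K^G(\beta)\ge\sum_{m\ge1}\frac{(-\beta)^m}{m!}\widetilde M_{m+1}$ with $\widetilde M_n:=\E_{|A|=L}\tr(G_{K,A}^n)/2^{L/2}$. The leading term uses (\ref{eq:aG}) at $n=2$, namely $\widetilde M_2\le(2C)^2\cdot4\cdot\sqrt{4!}=32\sqrt6\,C^2<80C^2$, giving $-\beta\widetilde M_2\ge-80\beta C^2$; the tail, controlled by (\ref{eq:aG}) as $2C\sum_{m\ge2}(2C\beta)^m(m+1)^2\sqrt{(2m+2)!}/m!$, converges and is $O(\beta^2C^3)$ provided $\beta\le c/C$ with $c$ small, since $\sqrt{(2m+2)!}/m!\sim2^{m+1}m/e$ forces the effective argument $2C\beta$ below the radius of convergence $\tfrac12$.

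The only substantive step is the first: replacing $\tr(e^{-\beta H_{K,A}})$ by $2^{L/2}$ is legitimate exactly because the numerator is nonpositive, and it is this move that turns the un-averageable ratio into a linear functional of the moments. Everything afterward is bookkeeping — the constants $12$ and $80$ are convenient integers exceeding $2^{7/2}\approx11.31$ and $32\sqrt6\approx78.38$, and the smallness of $c$ is needed only to keep the remainder series summable and of the claimed order. The one point meriting care is confirming that the two $\beta$-ranges $c/\sqrt{C_4}$ and $c/C$ are precisely the scales at which the respective tail series have bounded argument; these are the natural energy scales of $H_{K,A}$ and $G_{K,A}$, which is why they appear.
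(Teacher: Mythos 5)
Your proposal is correct and follows essentially the same route as the paper: lower-bound the partition function by $2^{L/2}$ via tracelessness, note the numerator $\tr(H_{K,A}e^{-\beta H_{K,A}})$ is nonpositive so the replacement yields a lower bound, expand in powers of $\beta$, and control the series with the moment bounds (\ref{eq:aH}) and (\ref{eq:aG}) — your explicit verification that the $n=2$ moments give $8\sqrt2\,C_4<12C_4$ and $32\sqrt6\,C^2<80C^2$ matches the paper's constants. You merely make explicit a detail the paper leaves implicit (the sign of the numerator, which is what legitimizes replacing the denominator).
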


\begin{proof}
Since $H_{K,A}$ and $G_{K,A}$ are traceless,
\begin{equation} \label{eq:par}
    \tr(e^{-\beta H_{K,A}})\ge2^{L/2},\quad\tr(e^{-\beta G_{K,A}})\ge2^{L/2},\quad\forall A.
\end{equation}
Using (\ref{eq:par}) and (\ref{eq:aH}),
\begin{multline}
    \mathcal E_K^H(\beta)=\E_{|A|=L}\tr(\sigma_{K,A}^H(\beta)H_{K,A})\ge\E_{|A|=L}\frac{\tr(e^{-\beta H_{K,A}}H_{K,A})}{2^{L/2}}=\sum_{n=0}^\infty\E_{|A|=L}\frac{(-\beta)^n\tr(H_{K,A}^{n+1})}{n!2^{L/2}}\\
    \ge-\sum_{n=2}^\infty\frac{\beta^{n-1}(2C_4)^{n/2}n^2\sqrt{n!}}{(n-1)!}.
\end{multline}
For $0\le\beta\le c/\sqrt{C_4}$, the last series above is convergent. Similarly,
\begin{multline}
    \mathcal E_K^G(\beta)=\E_{|A|=L}\tr(\sigma_{K,A}^G(\beta)G_{K,A})\ge\E_{|A|=L}\frac{\tr(e^{-\beta G_{K,A}}G_{K,A})}{2^{L/2}}=\sum_{n=0}^\infty\E_{|A|=L}\frac{(-\beta)^n\tr(G_{K,A}^{n+1})}{n!2^{L/2}}\\
    \ge-\sum_{n=2}^\infty\frac{\beta^{n-1}(2C)^nn^2\sqrt{(2n)!}}{(n-1)!}.
\end{multline}
For $0\le\beta\le c/C$, the last series above is convergent.
\end{proof}

Let
\begin{equation}
    \mathcal S_K^H(\beta):=\E_{|A|=L}S(\sigma_{K,A}^H(\beta)),\quad \mathcal S_K^G(\beta):=\E_{|A|=L}S(\sigma_{K,A}^G(\beta))
\end{equation}
so that $\mathcal S_K^H(0)=\mathcal S_K^G(0)=L(\ln2)/2$. As long as $K$ has at least two nonzero entries, both $\mathcal S_K^H$ and $\mathcal S_K^G$ are strictly monotonically decreasing (increasing) for positive (negative) $\beta$.

\begin{lemma} \label{l:thermo}
For $\beta$ such that $0\le-\mathcal E_K^H(\beta)=O(\sqrt{C_4})$,
\begin{equation} \label{eq:entH}
    \mathcal S_K^H(\beta)=\frac{L\ln2}2-\Omega(\mathcal E_K^H(\beta))^2/C_4.
\end{equation}
For $\beta$ such that $0\le-\mathcal E_K^G(\beta)=O(C)$,
\begin{equation} \label{eq:entG}
    \mathcal S_K^G(\beta)=\frac{L\ln2}2-\Omega(\mathcal E_K^G(\beta))^2/C^2.
\end{equation}
\end{lemma}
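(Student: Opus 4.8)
The plan is to prove (\ref{eq:entH}) and (\ref{eq:entG}) in parallel, since they are structurally identical under the replacement $(C_4,\sqrt{C_4})\mapsto(C^2,C)$; I describe the argument for $H_{K,A}$. The mechanism is the thermodynamic first law combined with a uniform upper bound on the heat capacity. First I would record the first law: for each fixed $A$ one has $\mathrm dS(\sigma_{K,A}^H(\beta))=\beta\,\mathrm d\tr(\sigma_{K,A}^H(\beta)H_{K,A})$, so integrating from $0$, averaging over $A$ (Fubini), and using $\mathcal S_K^H(0)=L(\ln2)/2$ gives
\[
\frac{L\ln2}2-\mathcal S_K^H(\beta)=\int_0^\beta\beta'\,V(\beta')\,\mathrm d\beta',\qquad V(\beta'):=-\frac{\mathrm d\mathcal E_K^H}{\mathrm d\beta'}\ge0,
\]
while $-\mathcal E_K^H(\beta)=\int_0^\beta V(\beta')\,\mathrm d\beta'$. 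Here $V(\beta')=\E_{|A|=L}\var_{\sigma_{K,A}^H(\beta')}(H_{K,A})$ is the $A$-averaged thermal variance, so both the entropy deficit and the energy are governed by this single nonnegative heat-capacity density.

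Second, I would establish the key estimate $V(\beta')\le MC_4$ with $M=\Theta(1)$ for $0\le\beta'\le c/\sqrt{C_4}$. Since $\var\le\tr(\sigma H^2)$ and $\tr(e^{-\beta'H_{K,A}})\ge2^{L/2}$ by (\ref{eq:par}),
\[
V(\beta')\le\E_{|A|=L}\frac{\tr(e^{-\beta'H_{K,A}}H_{K,A}^2)}{2^{L/2}}\le\sum_{n\ge0}\frac{(\beta')^n}{n!}\left|\E_{|A|=L}\frac{\tr(H_{K,A}^{n+2})}{2^{L/2}}\right|.
\]
The crucial structural point is that $\tr(\sigma H^2)\le\tr(e^{-\beta'H}H^2)/2^{L/2}$ is \emph{linear} in the trace, so after averaging over $A$ only the averaged moments enter and the bound (\ref{eq:aH}) applies term by term. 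Substituting $\beta'=c/\sqrt{C_4}$ turns the $n$-th term into $2C_4\,(c\sqrt2)^n(n+2)^2\sqrt{(n+2)!}/n!$, a series that converges by the ratio test and is $O(C_4)$. The identical computation with (\ref{eq:aG}) yields $V^G(\beta')\le MC^2$ for $\beta'\le c/C$.

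Third, with $V\le MC_4$ on $[0,\beta]$ I would close by a rearrangement inequality: for any $t\in[0,\beta]$,
\[
\frac{L\ln2}2-\mathcal S_K^H(\beta)\ge\int_t^\beta\beta'V\,\mathrm d\beta'\ge t\int_t^\beta V\,\mathrm d\beta'=t\bigl(-\mathcal E_K^H(\beta)+\mathcal E_K^H(t)\bigr)\ge t\bigl(-\mathcal E_K^H(\beta)-MC_4t\bigr),
\]
using $-\mathcal E_K^H(t)=\int_0^t V\le MC_4t$. Optimizing the right-hand side at $t=-\mathcal E_K^H(\beta)/(2MC_4)$, which lies in $[0,\beta]$ because $-\mathcal E_K^H(\beta)\le MC_4\beta$, gives the lower bound $(\mathcal E_K^H(\beta))^2/(4MC_4)=\Omega((\mathcal E_K^H(\beta))^2/C_4)$, i.e.\ (\ref{eq:entH}). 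The case of $G_{K,A}$ is word for word the same with $C_4\mapsto C^2$.

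The main obstacle is that all three steps live on the range $\beta\le c/\sqrt{C_4}$ (where the moment series converges and $V\le MC_4$), whereas the hypothesis is phrased through the energy as $-\mathcal E_K^H(\beta)=O(\sqrt{C_4})$, so I must show these two windows coincide up to constants. One inclusion, $-\mathcal E_K^H(\beta)\le O(\beta C_4)=O(\sqrt{C_4})$ on $\beta\le c/\sqrt{C_4}$, is exactly Lemma \ref{l:energy}. The reverse inclusion needs a matching lower bound $-\mathcal E_K^H(\beta)\ge\Omega(\beta C_4)$ on an initial interval, so that the energy window forces $\beta$ to be small; this rests on the infinite-temperature value $V(0)=\E_{|A|=L}\tr(H_{K,A}^2)/2^{L/2}=C_4\,\tr(H_K^2)/2^{N/2}\ge C_4/2$, which uses assumption (\ref{eq:H2}), together with control on how fast $V$ can decay. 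The latter is the genuinely delicate piece: unlike the upper bound on $V$, a lower bound on $V(\beta')$ (equivalently, a per-subsystem upper bound on $\tr(e^{-\beta'H_{K,A}})$) is \emph{not} linear in the trace, so the averaged moment bounds do not apply directly, and one must argue more carefully, for instance by controlling the third central moment to limit the decay rate of the heat capacity and conclude $V(\beta')\ge C_4/4$ throughout $[0,c/\sqrt{C_4}]$ for $c$ small enough.
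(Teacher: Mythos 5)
Your steps 1--3 are correct and are essentially the paper's own argument written in differential form: the paper proves Lemma \ref{l:energy} (the integrated version of your heat-capacity bound $V(\beta')\le MC_4$, obtained the same way from $\tr(e^{-\beta H_{K,A}})\ge2^{L/2}$ plus the averaged moment bounds (\ref{eq:aH}), (\ref{eq:aG})), deduces $\beta=\Omega(-\mathcal E_K^H(\beta)/C_4)$, and then integrates the first law in the form $\mathrm d\mathcal S_K^H/\mathrm d\mathcal E_K^H=\beta$; your rearrangement inequality is the same integration carried out in the variable $\beta$ instead of $\mathcal E$.

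The problem is your final paragraph, which leaves the proof unfinished and heads in a direction that is neither workable nor necessary. It is not workable because even the deferred lower bound $V(\beta')\ge C_4/4$ on $[0,c/\sqrt{C_4}]$ would not give you $\beta=O(1/\sqrt{C_4})$: the hypothesis only caps $-\mathcal E_K^H(\beta)=\int_0^\beta V$, and nothing you have excludes $V$ collapsing for $\beta'>c/\sqrt{C_4}$, which would let $\beta$ be arbitrarily large while the energy window is still respected. It is not necessary because your step 3 never needs $V\le MC_4$ on all of $[0,\beta]$, only on $[0,t]$, and the optimizer $t=-\mathcal E_K^H(\beta)/(2MC_4)=O(1/\sqrt{C_4})$ already lies in the controlled window by the hypothesis alone. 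Concretely, take $t:=\min\{-\mathcal E_K^H(\beta)/(2MC_4),\,c/\sqrt{C_4},\,\beta\}$. If the first entry attains the minimum, your computation goes through verbatim. The second entry can be the strict minimum only if $-\mathcal E_K^H(\beta)>2Mc\sqrt{C_4}$, in which case $-\mathcal E_K^H(t)\le Mc\sqrt{C_4}<-\mathcal E_K^H(\beta)/2$ and the resulting bound $t\cdot(-\mathcal E_K^H(\beta))/2$ is $\Omega\big((\mathcal E_K^H(\beta))^2/C_4\big)$ after using the hypothesis $-\mathcal E_K^H(\beta)=O(\sqrt{C_4})$ once more. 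The third entry can be the strict minimum only if $\beta\le c/\sqrt{C_4}$, where $-\mathcal E_K^H(\beta)\le MC_4\beta$ yields a contradiction. (The paper's change of variables sidesteps the issue even more cheaply: for any intermediate $\beta'$ with $\beta'>c/\sqrt{C_4}$, the inequality $\beta'=\Omega(-\mathcal E_K^H(\beta')/C_4)$ follows directly from the hypothesis.) With your last paragraph replaced by this observation, the proof is complete, and the $G$ case follows by the substitution $C_4\mapsto C^2$, $\sqrt{C_4}\mapsto C$ as you say.
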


\begin{proof}
Lemma \ref{l:energy} implies that
\begin{equation}
\beta=\Omega(-\mathcal E_K^H(\beta)/C_4).
\end{equation}
Combining this with the thermodynamic relation
\begin{equation}
    \mathrm d\mathcal S_K^H(\beta)/\mathrm d\beta=\beta\,\mathrm d\mathcal E_K^H(\beta)/\mathrm d\beta\implies\mathrm d\mathcal S_K^H(\beta)/\mathrm d\mathcal E_K^H(\beta)=\beta,
\end{equation}
we obtain Eq. (\ref{eq:entH}). Equation (\ref{eq:entG}) can be proved similarly.
\end{proof}

Let $|\psi_K\rangle$ be an eigenstate of $H_K$ with eigenvalue $\lambda$.

\begin{lemma} \label{l:mainH}
If $|\lambda|=\Omega(1)$, then
\begin{equation} \label{eq:mainH}
    \E_{|A|=L}S(\psi_{K,A})=\frac{L\ln2}2-\Omega(L^4/N^4).
\end{equation}
\end{lemma}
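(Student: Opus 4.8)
The plan is to convert the energy constraint carried by the eigenvalue $\lambda$ into an entropy deficit through a single-temperature Gibbs inequality, and then to control the temperature with Lemma~\ref{l:energy}. The only input about $\lambda$ that I need is the exact averaged-energy identity: because $\E_{|A|=L}H_{K,A}\otimes I_{\bar A}=H_K$, taking the expectation of $H_{K,A}$ in $\psi_{K,A}$ and averaging over $A$ gives
\begin{equation}
\E_{|A|=L}\tr(\psi_{K,A}H_{K,A})=\langle\psi_K|H_K|\psi_K\rangle=\lambda .
\end{equation}
Using the measure-preserving symmetry $K\mapsto-K$ (under which $H_{K,A}\mapsto-H_{K,A}$ while the hypotheses \eqref{eq:H2}--\eqref{eq:aG} remain invariant), I may assume $\lambda\le-c_0$ for a constant $c_0>0$.

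The second ingredient is the Gibbs (free-energy) inequality $S(\rho)\le\beta\tr(\rho M)+\ln\tr e^{-\beta M}$, valid for any state $\rho$, any Hermitian $M$, and any $\beta\in\mathbb R$. Applying it with $\rho=\psi_{K,A}$ and $M=H_{K,A}$, rewriting $\ln\tr e^{-\beta H_{K,A}}=S(\sigma_{K,A}^H(\beta))-\beta\tr(\sigma_{K,A}^H(\beta)H_{K,A})$, and averaging over $A$ yields, for every $\beta\ge0$,
\begin{equation}\label{eq:gibbsavg}
\E_{|A|=L}S(\psi_{K,A})\le\mathcal S_K^H(\beta)+\beta\big(\lambda-\mathcal E_K^H(\beta)\big).
\end{equation}
Combining the trivial cap $\mathcal S_K^H(\beta)\le\frac{L\ln2}2$ with the bound $-\mathcal E_K^H(\beta)\le12\beta C_4+O(\beta^2C_4^{3/2})$ of Lemma~\ref{l:energy} turns the right-hand side into an explicit function of $\beta$,
\begin{equation}\label{eq:quad}
\E_{|A|=L}S(\psi_{K,A})\le\frac{L\ln2}2+\beta\lambda+12\beta^2C_4+O(\beta^3C_4^{3/2}),
\end{equation}
in which the gain comes from the negative linear term $\beta\lambda$ and the loss from the quadratic heat-capacity term.

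It then remains to optimize \eqref{eq:quad} over $\beta$ while keeping $\beta$ inside the window $[0,c/\sqrt{C_4}]$ of Lemma~\ref{l:energy}; this forces a split on the size of $|\lambda|$ relative to $\sqrt{C_4}=\Theta(N^2/L^2)$. When $|\lambda|\le K_1\sqrt{C_4}$ for a suitable constant $K_1$, the unconstrained minimizer $\beta\approx|\lambda|/(24C_4)$ already lies in the window, and substituting it leaves a net deficit $-\Omega(\lambda^2/C_4)$; since $\lambda^2=\Omega(1)$ and $1/C_4=\Theta(L^4/N^4)$, this is exactly $\frac{L\ln2}2-\Omega(L^4/N^4)$, as claimed. (This explicit choice is the rigorous surrogate for the energy-matching temperature $\mathcal E_K^H(\beta)=\lambda$ of the heuristic in Section~\ref{sec:skt}, at which \eqref{eq:gibbsavg} would collapse to $\mathcal S_K^H(\beta)$ and Lemma~\ref{l:thermo} would supply the same deficit; working with \eqref{eq:quad} avoids having to certify that the matching temperature falls in the window.) When $|\lambda|>K_1\sqrt{C_4}$ --- which can occur only for $L=\Theta(N)$, where $\sqrt{C_4}=\Theta(1)$ --- I instead evaluate \eqref{eq:quad} at the endpoint $\beta=c/\sqrt{C_4}$; there the linear term $c\lambda/\sqrt{C_4}$ dominates the bounded quadratic $12c^2$, giving the even stronger deficit $\frac{L\ln2}2-\Omega(1)$.

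The main obstacle is precisely this range bookkeeping. Lemma~\ref{l:energy} controls $\mathcal E_K^H(\beta)$ only for $\beta=O(1/\sqrt{C_4})$ and, crucially, provides only an upper bound on $|\mathcal E_K^H(\beta)|$ (the energy does not drop \emph{too} fast); it does not guarantee that the thermal energy actually reaches $\lambda$ inside the window, so I cannot unconditionally match energies and invoke Lemma~\ref{l:thermo}. Passing to the explicit quadratic \eqref{eq:quad} sidesteps this, because there the deficit is produced by the controllable term $\beta\lambda$ rather than by a lower bound on the entropy decrease. The only routine checks left are that the cubic remainder $O(\beta^3C_4^{3/2})$ stays below the quadratic gain once $c$ is chosen small enough, and that the two cases in the split exhaust all eigenvalues with $|\lambda|=\Omega(1)$.
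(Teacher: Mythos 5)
Your proof is correct, and while it rests on the same two pillars as the paper's --- the averaged-energy identity $\E_{|A|=L}\tr(\psi_{K,A}H_{K,A})=\lambda$ and the moment-based energy bound of Lemma~\ref{l:energy} --- the execution is genuinely different. The paper invokes an exact maximum-entropy characterization (Lemma 11 of Ref.~\cite{Hua21ISIT}) to conclude $\E_{|A|=L}S(\psi_{K,A})\le\mathcal S_K^H(\beta_*)$ at the energy-matching inverse temperature defined by $\mathcal E_K^H(\beta_*)=\lambda$, and then bounds $\mathcal S_K^H(\beta_*)$ via the thermodynamic integration of Lemma~\ref{l:thermo}. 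You instead apply the one-shot Gibbs variational inequality $S(\rho)\le\beta\tr(\rho M)+\ln\tr e^{-\beta M}$ at a freely chosen $\beta\ge0$, which after averaging over $A$ and inserting Lemma~\ref{l:energy} gives the explicit bound $\E_{|A|=L}S(\psi_{K,A})\le\frac{L\ln2}2+\beta\lambda+12\beta^2C_4+O(\beta^3C_4^{3/2})$, to be optimized over $\beta$ in the window $[0,c/\sqrt{C_4}]$. What your route buys is self-containedness (no external maximum-entropy lemma, no Lemma~\ref{l:thermo}) and, more substantively, it sidesteps a point the paper's proof glosses over: Lemma~\ref{l:thermo} is stated only for $-\mathcal E_K^H(\beta)=O(\sqrt{C_4})$, i.e., the matching temperature must lie in the regime controlled by Lemma~\ref{l:energy}, and this is not verified when $|\lambda|$ is large compared with $\sqrt{C_4}=\Theta(N^2/L^2)$; your endpoint evaluation at $\beta=c/\sqrt{C_4}$ handles that regime cleanly and even yields the stronger deficit $\Omega(1)$ there. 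Two minor remarks: your parenthetical claim that the large-$|\lambda|$ case ``can occur only for $L=\Theta(N)$'' is inessential and not quite justified (nothing in the hypotheses caps $|\lambda|$ by an absolute constant), but your case split covers all eigenvalues regardless; and the constants can indeed be chosen consistently (e.g., $K_1=24c$ with $c$ sufficiently small makes the interior minimizer admissible in the first case, dominates the quadratic and cubic terms in the second, and keeps the cubic remainder below half the quadratic gain), so the ``routine checks'' you defer do go through.
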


\begin{proof}
Since
\begin{equation}
    \E_{|A|=L}\tr(\psi_{K,A}H_{K,A})=\langle\psi_K|H_K|\psi_K\rangle=\lambda,
\end{equation}
an upper bound on $\E_{|A|=L}S(\psi_{K,A})$ can be obtained as follows. For each $A$, we introduce a density matrix $\rho_A$ supported on $A$. We maximize $\E_{|A|=L}S(\rho_A)$ subject to the constraint
\begin{equation} \label{eq:ec}
    \E_{|A|=L}\tr(\rho_AH_{K,A})=\lambda.
\end{equation}
Lemma 11 in Ref.~\cite{Hua21ISIT} or Ref.~\cite{Hua22TIT} implies that the maximum is achieved when $\rho_A=\sigma_{K,A}^H(\beta)$, where the inverse temperature $\beta$ is determined from
\begin{equation}
    \mathcal E_K^H(\beta)=\lambda.
\end{equation}

Assume without loss of generality that $\lambda<0$. Lemma \ref{l:thermo} implies that
\begin{equation}
    \E_{|A|=L}S(\psi_{K,A})\le\mathcal S_K^H(\beta)=\frac{L\ln2}2-\Omega(\lambda^2/C_4)=\frac{L\ln2}2-\Omega(L^4/N^4).
\end{equation}
\end{proof}

\begin{lemma} \label{l:mainG}
If $|\lambda|\le1/2$, then
\begin{equation} \label{eq:mainG}
    \E_{|A|=L}S(\psi_{K,A})=\frac{L\ln2}2-\Omega(L^8/N^8).
\end{equation}
\end{lemma}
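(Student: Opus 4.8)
The plan is to reduce Lemma~\ref{l:mainG} to the $G_K$-machinery already assembled, by exploiting the elementary fact that an eigenstate of $H_K$ is automatically an eigenstate of $H_K^2$, and hence of the traceless operator $G_K$. First I would record that $H_K|\psi_K\rangle=\lambda|\psi_K\rangle$ gives $H_K^2|\psi_K\rangle=\lambda^2|\psi_K\rangle$, and that by construction
\[
G_K=H_K^2-\frac{\tr(H_K^2)}{2^{N/2}}I,
\]
because the diagonal $J_1=J_2$ terms of $H_K^2$ collapse to $\frac1{\binom N4}\sum_J K_J^2\,I=\frac{\tr(H_K^2)}{2^{N/2}}I$ using $X_J^2=I$. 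Therefore $|\psi_K\rangle$ is an eigenstate of $G_K$ with eigenvalue $\mu:=\lambda^2-\frac{\tr(H_K^2)}{2^{N/2}}$. Invoking assumption~(\ref{eq:H2}) together with $|\lambda|\le1/2$ yields $\mu\le\frac14-\frac12=-\frac14$, so $|\mu|=\Omega(1)$. This is precisely the $G_K$-analogue of the hypothesis $|\lambda|=\Omega(1)$ driving Lemma~\ref{l:mainH}.

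Second, I would run the variational argument of Lemma~\ref{l:mainH} verbatim with $(H_K,H_{K,A},\lambda,C_4)$ replaced by $(G_K,G_{K,A},\mu,C)$. The averaging identity~(\ref{eq:ave}) gives $\E_{|A|=L}\tr(\psi_{K,A}G_{K,A})=\langle\psi_K|G_K|\psi_K\rangle=\mu$, so $\E_{|A|=L}S(\psi_{K,A})$ is at most the maximum of $\E_{|A|=L}S(\rho_A)$ over families $\{\rho_A\}$ obeying $\E_{|A|=L}\tr(\rho_AG_{K,A})=\mu$. By the same maximum-entropy principle used in Lemma~\ref{l:mainH}, this maximum is attained at the thermal family $\rho_A=\sigma_{K,A}^G(\beta)$ with $\mathcal E_K^G(\beta)=\mu$. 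Feeding $\mathcal E_K^G(\beta)=\mu=-\Theta(1)$ into Eq.~(\ref{eq:entG}) of Lemma~\ref{l:thermo} gives
\[
\E_{|A|=L}S(\psi_{K,A})\le\mathcal S_K^G(\beta)=\frac{L\ln2}2-\Omega(\mu^2/C^2)=\frac{L\ln2}2-\Omega(L^8/N^8),
\]
where the last step uses $C=\Theta(N^4/L^4)$ and $\mu^2=\Omega(1)$. Note the deviation degrades from $\Omega(L^4/N^4)$ in Lemma~\ref{l:mainH} to $\Omega(L^8/N^8)$ precisely because the relevant energy scale for $G_K$ is $C$ rather than $\sqrt{C_4}$.

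The hard part is the book-keeping around the validity window of Lemma~\ref{l:thermo}, whose hypothesis is $0\le-\mathcal E_K^G(\beta)=O(C)$. Because the three conditions we may assume bound $\frac{\tr(H_K^2)}{2^{N/2}}$ only from below, $-\mu$ could a priori be large, so the $\beta$ solving $\mathcal E_K^G(\beta)=\mu$ might fall outside this window. I would sidestep this with a monotonicity argument rather than applying Lemma~\ref{l:thermo} at $\mu$ directly. Fix the in-window target energy $E_*=-1/4$; it is attainable because $\psi_{K,A}$ itself realizes average energy $\mu\le-1/4$, forcing $\E_{|A|=L}\lambda_{\min}(G_{K,A})\le\mu\le-1/4$, so by continuity of the strictly decreasing $\mathcal E_K^G$ from $\mathcal E_K^G(0)=0$ there exists $\beta_0$ with $\mathcal E_K^G(\beta_0)=-1/4$. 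Since $-\mathcal E_K^G(\beta_0)=1/4=O(C)$, Lemma~\ref{l:thermo} applies at $\beta_0$ and yields $\mathcal S_K^G(\beta_0)=\frac{L\ln2}2-\Omega(1/C^2)$. As $\mathcal S_K^G$ is monotonically decreasing in $\beta$, the maximum entropy under the constraint $\mu\le-1/4$ is at most $\mathcal S_K^G(\beta_0)$, giving $\E_{|A|=L}S(\psi_{K,A})\le\frac{L\ln2}2-\Omega(L^8/N^8)$. Only the one-sided bound $\mu\le-1/4$ is used, and a more negative $\mu$ only strengthens the conclusion; the identical remark tidies up the very negative eigenvalues left implicit in Lemma~\ref{l:mainH}.
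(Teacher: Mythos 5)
Your proposal is correct and follows essentially the same route as the paper: identify $|\psi_K\rangle$ as an eigenstate of $G_K$ with eigenvalue at most $-1/4$ under assumption (\ref{eq:H2}), bound $\E_{|A|=L}S(\psi_{K,A})$ by the thermal entropy $\mathcal S_K^G$ via the maximum-entropy principle, and conclude with Lemma \ref{l:thermo} and $C=\Theta(N^4/L^4)$. Your monotonicity detour through a $\beta_0$ with $\mathcal E_K^G(\beta_0)=-1/4$ is merely a more explicit handling of the validity window $-\mathcal E_K^G(\beta)=O(C)$ than the paper's terse ``$\beta$ satisfies $\mathcal E_K^G(\beta)\le-1/4$,'' but the substance is identical.
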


\begin{proof}
Using (\ref{eq:ave}) and (\ref{eq:H2}),
\begin{equation}
\E_{|A|=L}\tr(\psi_{K,A}G_{K,A})=\langle\psi_K|G_K|\psi_K\rangle=\langle\psi_K|H_K^2|\psi_K\rangle-\tr(H_K^2)/2^{N/2}\le\lambda^2-1/2\le-1/4.
\end{equation}
As in the proof of Lemma \ref{l:mainH},
\begin{equation}
    \E_{|A|=L}S(\psi_{K,A})\le\mathcal S_K^G(\beta),
\end{equation}
where $\beta$ satisfies
\begin{equation}
    \mathcal E_K^G(\beta)\le-1/4.
\end{equation}
Lemma \ref{l:thermo} implies that
\begin{equation}
    \mathcal S_K^G(\beta)=\frac{L\ln2}2-\Omega(1/C^2)=\frac{L\ln2}2-\Omega(L^8/N^8).
\end{equation}
\end{proof}

Equation (\ref{eq:2prove}) follows from Lemmas \ref{l:mainH} and \ref{l:mainG}.

\paragraph{Note added.}Recently, we became aware of related work which explores constraints on the entanglement entropy originating from energy variance in spatially local models~\cite{ghosh2024late}.

\section*{Acknowledgments}

We would like to thank Gregory D. Kahanamoku-Meyer for discussions and collaboration on related work, Nicole Yunger Halpern for a conversation that motivated this work, and Aram W. Harrow for helpful discussions. This work was supported by the U.S. Department of Energy through the Quantum Information Science Enabled Discovery (QuantISED) for High Energy Physics (KA2401032) and through the GeoFlow Grant No.~de-sc0019380. N.Y.Y. acknowledges support from a Simons Investigatorship.

\printbibliography

\end{document}